\newtheorem{theorem}{Theorem}[section]
\newtheorem{corollary}{Corollary}[theorem]
\newtheorem{lemma}[theorem]{Lemma}
\begin{document}

\title{Access Balancing in Storage Systems by Labeling Partial Steiner Systems}

\author{Yeow~Meng~Chee\thanks{Yeow Meng Chee is with the Department of Industrial Systems Engineering and Management, National University of Singapore, Singapore}%
, 
Charles~J.~Colbourn\thanks{Charles J. Colbourn is with the School of Computing, Informatics, and Decision Systems Engineering, Arizona State University, Tempe AZ, USA}%
, 
Hoang~Dau\thanks{Hoang Dau is with Computer Science and Information Technology, RMIT University, Melbourne, Australia}%
,  
Ryan~Gabrys\thanks{Ryan Gabrys is with Electrical and Computer Engineering, University of Illinois at Urbana-Champaign, Urbana, IL, USA}%
, \\
Alan~C.H.~Ling\thanks{Alan C.H. Ling is with the Department of Computer Science, University of Vermont, Burlington, VT, USA}
,  
Dylan~Lusi\thanks{Dylan Lusi is with the School of Computing, Informatics, and Decision Systems Engineering, Arizona State University, Tempe AZ, USA}, and
Olgica~Milenkovic\thanks{ Olgica Milenkovic is with Electrical and Computer Engineering, University of Illinois at Urbana-Champaign, Urbana, IL, USA} 
}

\maketitle

\begin{abstract}
Storage archtectures ranging from minimum bandwidth regenerating encoded distributed storage systems to declustered-parity RAIDs can be designed using dense partial Steiner systems in order to support fast reads, writes, and recovery of failed storage units. 
In order to ensure good performance, popularities of the data items should be taken into account and the frequencies of accesses to the storage units made as uniform as possible. A proposed combinatorial model ranks items by popularity and assigns data items to elements in a dense partial Steiner system so that the sums of ranks of the elements in each block are as equal as possible.  
By developing necessary conditions in terms of independent sets, we demonstrate that certain Steiner systems must have a much larger difference between  the largest and smallest block sums than is dictated by an elementary lower bound.
In contrast, we also show that certain dense partial $S(t,t+1,v)$ designs can be labeled to realize the elementary lower bound. 
Furthermore, we prove that for every admissible order $v$, there is a Steiner triple system ($S(2,3,v)$) whose largest difference in block sums is within an additive constant of the lower bound.
\end{abstract}


\section{Introduction}

Distributed storage systems~\cite{el2010fractional,silberstein2015optimal}, systems for batch coding~\cite{silberstein2016optimal}, and multiserver private information retrieval systems~\cite{fazeli2015codes} have each employed combinatorial designs for data placement, so that  elements of the design are associated with data items and blocks with storage units.  
In these contexts, the most common types of designs employed are $t$-designs and $t$-packings.
A \textit{$t$-$(v,k,\lambda)$ packing} is a pair $(X, \mathcal{B})$, where $X$, the point set, is a $v$-set and and $\mathcal{B}$ is a collection of $k$-subsets ({\em blocks})  of $X$ such that every $t$-subset of $X$ is contained in at most $\lambda$ blocks. 
The packing is a \textit{$t$-$(v,k,\lambda)$ design} when every $t$-subset of $X$ is a subset of exactly $\lambda$ blocks.
A $t$-$(v,k,1)$ design is a \textit{Steiner system}, denoted by S$(t,k,v)$. 
A $2$-$(v,3,1)$ design is  a \textit{Steiner triple system of order v}, denoted by STS$(v)$.
When $\lambda = 1$, a $t$-$(v,k,1)$ packing is also referred to as a {\em partial $S(t,k,v)$} or {\em partial Steiner system}.

When data items are of the same size, and data is placed on storage units using a $t$-design, placement of data is uniform across the storage units.  
Indeed in  $t$-$(v,k,\lambda)$ design, every point appears in exactly $r = \frac{\lambda \binom{v}{t}}{\binom{k}{t}}$ blocks; this is the {\em replication number} of the design.
In order to understand why Steiner systems can be employed in data placement, we outline some examples.
Large-scale distributed storage systems (DSS) must address potential  loss of storage units, while not losing data. 
One solution is to replicate each data item and distribute these replicas among multiple storage nodes; systems such as the Hadoop Distributed File System and the Google File System employ this strategy~\cite{cidon2013copysets}. One can further mitigate information loss by sensibly organizing the data. 
For example, exact Minimum Bandwidth Regenerating (MBR) codes~\cite{el2010fractional} consist of two subcodes, an outer MDS code along with an inner \textit{fractional repetition code} (FRC) that support redundancy and repairability, respectively.  
To make this precise, an {\em $(n,k,d)$-DSS} with $k \leq d \leq n$ consists of $n$ storage nodes in which a read can be accomplished by access to $k$ nodes and a failed node recovered by access to $d$ nodes.  
A fractional repetition code $\mathcal{C}$ \cite{el2010fractional} with repetition degree $\rho$ for an $(n,k,d)$-DSS is a collection $\mathcal{C}$ of $n$ subsets $V_1,V_2,\dots, V_n$ of a set $V$, $|V| = v$, and of cardinality $d$ each, satisfying the condition that each element of $V$ belongs to exactly $\rho$ different sets in the collection. 
The \textit{rate} of the FRC is $\min_{I \subset [n], |I| = k} |\cup_{i \in I} V_i|$. 
To optimize the rate and ensure correct repetition and repair, we require that $|V_i \cap V_j| \leq 1$ whenever $i \neq j$.
When $\rho = \frac{v-1}{d-1}$, such an FRC is a Steiner $2$-$(v,d,1)$ design with replication number $\rho$, where the set of (coded) file chunks $V$ is the set of points and the set of storage nodes $\{V_1, \dots, V_n\}$ is the set of blocks of the design. 

Steiner systems also prove useful for applications needing both high data availability and throughput, such as transaction processing. 
The storage systems underlying these applications require uninterrupted operation, satisfying user requests for data even in the event of disk failure and repairing these failed disks, on-line, in parallel. Continuous operation alone is not sufficient, because  such systems cannot afford to suffer significant loss of  performance during disk failures. \textit{Declustered-parity RAIDS} (DPRAIDs) are designed to satisfy these requirements~\cite{chen1994raid,holland1992parity}. 
Like standard RAIDs (short for ``Redundant Arrays of Inexpensive Disks''), DPRAIDs handle disk failure by using parity-encoded redundancy, in which subsets of the stored data (called \textit{parity stripes}) are XORed together to store a single-error-correction code. 
Unlike standard RAIDs, however, all disks in the DPRAID cooperate in the reconstruction of all the data units on a single failed disk. 
One can represent a DPRAID as a $t$-$(v,k,\lambda)$ design $(X,\mathcal{B})$, with $X$ ($|X| = v$) being the set of disks in the array, and $\mathcal{B}$ being the set of all parity stripes, each of size $k$. 
Then each disk occurs in the same number $c$ of parity stripes, guaranteeing that the reconstruction effort is distributed evenly. 

Although designs arise naturally in balancing data placement, little attention has been paid to the relative popularity of the data items.
However, one can exploit popularity information in order to  improve the relative equality of access among the storage units.
Dau and Milenkovic~\cite{dau2018maxminsum} formulate a number of problems to address access balancing, by labeling the points of the underlying design.
In order to introduce their problems and results, we first present more definitions and known results concerning designs.  

Although storage systems handle ``hot'' (frequently accessed) and ``cold'' (infrequently accessed) data categories differently, they do not take the long-term popularity of the data items within each category into account, which may result in unbalanced access frequencies to the storage units.  
Access balancing can be achieved in part by selecting an appropriate packing or design, and by appropriate association of data items with elements of the packing or design. Dau and Milenkovic~\cite{dau2018maxminsum} propose a combinatorial model that ranks data items by popularity, and then strives to ensure that the sums of the ranks of the data elements in each block are not too small, not too large, or not too different from block to block. 
In~\S\ref{pointlabel} we summarize their model, state elementary  bounds on  various block sums, and provide a small but important improvement in the lower bound on the smallest possible difference among the block sums in a Steiner triple system.
In~\S\ref{indep} we establish a close connection between such block sums and the size of a maximum independent set of elements in the packing or design.  
For certain designs, this connection can be used to show that, no matter how data items are associated with the elements of the design, the block sums must be  far from the values dictated by the elementary bounds from~\S\ref{pointlabel}.
Indeed, in order to approach the elementary bounds, one must select designs or packings with very specific properties; we pursue this in~\S\ref{sec:mis}.
The described findings indicate the need to find specific $S(t,k,v)$ designs, or at least `dense' $t$-$(v,k,1)$ packings, to match the elementary bounds more closely.
In~\S\ref{dense}, we explore a construction of $t$-$(v,t+1,1)$ packings that asymptotically match the bounds and contain almost the same number of blocks as the full Steiner system $S(t,t+1,v)$.
Completion of the dense $t$-$(v,t+1,1)$ packings to a Steiner system $S(t,t+1,v)$ appears problematic for general $t$;  doing so without dramatically changing the block sums appears to be even more challenging. Nevertheless, in~\S\ref{sumsts}, we pursue this to establish, for every admissible order $v$,  the existence of a Steiner triple system of order $v$ whose difference in block sums is at most an additive constant more than the elementary lower bound.

\section{Point Labelings and Block Sums}\label{pointlabel}

Let $D = (V,{\mathcal B})$ be a $t$-$(v,k,\lambda)$ packing.  
A {\em point labeling} of $D$ is a bijection ${\sf rk} : V \mapsto \{0,\dots,v-1\}$; our interpretation is that {\sf rk} maps an element  to its rank by popularity.
The {\em reverse} $\overline{\sf rk}$ of a point labeling ${\sf rk}$ has $\overline{\sf rk}(i) = v-1-{\sf rk}(i)$ for each $i \in   \{0,\dots,v-1\}$.
With respect to a specific point labeling {\sf rk}, define ${\sf sum}(B,{\sf rk}) = \sum_{x\in B}{\sf rk}(x)$ when $B \in {\mathcal B}$.  
\renewcommand{\arraycolsep}{1pt}
Then define \[
\begin{array}{rcl} 
{\sf MinSum}(D,{\sf rk}) & = & \min({\sf sum}(B,{\sf rk}) : B \in {\mathcal B});\\
{\sf MaxSum}(D,{\sf rk}) & = & \max({\sf sum}(B,{\sf rk}) : B \in {\mathcal B});\\
{\sf DiffSum}(D,{\sf rk}) & = & {\sf MaxSum}(D,{\sf rk})-{\sf MinSum}(D,{\sf rk});\\
{\sf RatioSum}(D,{\sf rk}) & = & {\sf MaxSum}(D,{\sf rk})/{\sf MinSum}(D,{\sf rk}).\\
\end{array} \]

Following~\cite{dau2018maxminsum}, one primary objective is to choose point labelings to maximize the {\sf MinSum} and/or to minimize one of the other three.  
Access balancing is concerned primarily with minimizing the {\sf DiffSum} or {\sf RatioSum}; because of the similarity between these two entities we often focus on the {\sf DiffSum}.
Let ${\mathcal R}_D$ denote the set of all point labelings of $D$.
Noting that ${\sf MaxSum}(D,{\sf rk}) = k(v-1) - {\sf MinSum}(D,\overline{\sf rk}) $,
we define \[
\begin{array}{rcl} 
{\sf MinSum}(D) & = & \max({\sf MinSum}(D,{\sf rk}) : {\sf rk} \in {\mathcal R}_D);\\
{\sf MaxSum}(D) & = & k(v-1)  -  {\sf MinSum}(D);\\
{\sf DiffSum}(D) & = & \min({\sf DiffSum}(D,{\sf rk}) : {\sf rk}  \in {\mathcal R}_D);\\
{\sf RatioSum}(D) & = & \min({\sf RatioSum}(D,{\sf rk}) : {\sf rk}  \in {\mathcal R}_D).\\
\end{array} \]

If the storage system dictates the data layout and data items have the same size, we are free to permute the data items; this is captured by the selection of the point labeling {\sf rk}.  
If we are also free to choose the $t$-$(v,k,1)$ packing that determines the data layout, we may select a packing to improve the sum metrics defined.  
In order to capture this, let ${\mathcal D}_{t,k,v,b}$ denote the set of all $t$-$(v,k,1)$ packings having exactly $b$ blocks.
Then define \[
\begin{array}{rcl} 
{\sf MinSum}(t,k,v,b) & = & \max({\sf MinSum}(D) : D \in {\mathcal D}_{t,k,v,b});\\
{\sf MaxSum}(t,k,v,b) & = & k(v-1)  -  {\sf MinSum}(t,k,v,b);\\
{\sf DiffSum}(t,k,v,b) & = & \min({\sf DiffSum}(D) : D \in  {\mathcal D}_{t,k,v,b});\\
{\sf RatioSum}(t,k,v,b) & = & \min({\sf RatioSum}(D) : D \in {\mathcal D}_{t,k,v,b}).\\
\end{array} \]
When $b = \frac{\binom{v}{t}}{\binom{k}{t}}$, the packing is a Steiner system $S(t,k,v)$; in these cases we omit $b$ from the notation to get ${\sf MinSum}(t,k,v)$ and similarly for all other entities.

\begin{theorem}\label{basic}  {\rm \cite{dau2018maxminsum}} When $D$ is a Steiner system $S(t,k,v)$, 
\[ \begin{array}{ccccl} 
 {\sf MinSum}(D)  &\leq &{\sf MinSum}(t,k,v)&\leq& \frac{1}{2} (v(k-t+1) + k(t-2));\\
{\sf MaxSum}(D)&\geq& {\sf MaxSum}(t,k,v) & \geq & \frac{1}{2}(v(k+t-1)-kt);\\
{\sf DiffSum}(D)& \geq &{\sf DiffSum}(t,k,v) & \geq &(v-k)(t-1);\\
 {\sf RatioSum}(D)& \geq &{\sf RatioSum}(t,k,v) & \geq &\frac{v(k+t-1)-kt}{v(k-t+1) + k(t-2)}.
\end{array} \]
When $k=t+1$, 
${\sf MinSum}(D) \leq (v-1) + \binom{t}{2},$
${\sf MaxSum}(D) \geq  t(v-1) - \binom{t}{2},$ 
${\sf DiffSum}(D) \geq (t-1)(v-t-1),$ and 
${\sf RatioSum}(D) \geq {\sf RatioSum}(t,t+1,v) \geq \frac{t(v-1) - \binom{t}{2}}{(v-1) + \binom{t}{2}}.$

When in addition $t=2$ ($D$ is a Steiner triple system), the stronger bounds 
${\sf DiffSum}(D) \geq v$ and 
${\sf RatioSum}(D) \geq 2 $ hold.
\end{theorem}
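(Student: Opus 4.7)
The earlier parts of the theorem are the elementary counting bounds already appearing in~\cite{dau2018maxminsum}; only the Steiner triple system improvements in the final sentence are new, so these are the bounds I would focus on proving. The plan for establishing ${\sf DiffSum}(D) \geq v$ and ${\sf RatioSum}(D) \geq 2$ for an arbitrary STS$(v)$ with $v \geq 7$ under an arbitrary labeling ${\sf rk}\colon V \to \{0,1,\ldots,v-1\}$ (identifying each point with its rank) is to exhibit two natural witness blocks and then refine the resulting naive estimate by exploiting the defining pairwise intersection property of the STS.

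First I would take $B_1$ to be the unique block containing the pair $\{0,1\}$ and $B_2$ the unique block through $\{v-2,v-1\}$, writing $B_1=\{0,1,x\}$ and $B_2=\{v-2,v-1,y\}$; immediately ${\sf sum}(B_1)\leq v$ and ${\sf sum}(B_2)\geq 2v-3$, which is exactly where the elementary argument giving ${\sf DiffSum}\geq v-3$ stops. To recover the missing~$3$, I would examine the configurations in which $|B_1\cap B_2|\leq 1$ constrains $(x,y)$, splitting on the shared element: empty intersection, or the shared element being $v-2$, $v-1$, $0$, $1$, or a common third point in $\{2,\ldots,v-3\}$. In most of these patterns one already obtains ${\sf sum}(B_2)-{\sf sum}(B_1)\geq v$ by direct substitution; for instance, $x=v-2$ yields ${\sf sum}(B_1)=v-1$ and ${\sf sum}(B_2)\geq 2v-1$, while $y=1$ yields ${\sf sum}(B_1)\leq v-2$ and ${\sf sum}(B_2)=2v-2$.

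Two symmetric ``bad'' subcases remain: $B_1=\{0,1,v-1\}$ (with ${\sf sum}(B_1)=v$ and ${\sf sum}(B_2)\geq 2v-1$) and $B_2=\{0,v-2,v-1\}$ (with ${\sf sum}(B_1)\leq v-2$ and ${\sf sum}(B_2)=2v-3$); in each the naive estimate only yields $v-1$, so an auxiliary third block must be introduced to close the gap. In the first bad case I would bring in $B_3=\{0,2,p\}$, the unique block through $\{0,2\}$; the Steiner condition $|B_1\cap B_3|\leq 1$ forces $p\in\{3,\ldots,v-2\}$, and I would split on whether $p\leq v-3$ (so that ${\sf MinSum}\leq v-1$) or $p=v-2$ (in which case $\{0,2,v-2\}$ now excludes $y\in\{0,1,2\}$ from $B_2$, giving ${\sf sum}(B_2)\geq 2v$); either branch yields ${\sf DiffSum}\geq v$. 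The second bad case is handled symmetrically by introducing the unique block through $\{v-3,v-1\}$ and running the analogous argument.

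The RatioSum inequality falls out of the same case analysis, because in every branch the witness pair satisfies either ${\sf MinSum}\leq v$ with ${\sf MaxSum}\geq 2v$, or ${\sf MinSum}\leq v-c$ with ${\sf MaxSum}\geq 2v-c$ for some $c\geq 1$, and ${\sf MaxSum}/{\sf MinSum}\geq 2$ in either form. The main obstacle I anticipate is the bookkeeping for the intersection-pattern split: one must verify that the only subcases that fail to yield ${\sf DiffSum}\geq v$ directly are precisely the two symmetric extremal configurations, and that for those the auxiliary third block always forces one of ${\sf MaxSum}\geq 2v$ or ${\sf MinSum}\leq v-1$. Once that enumeration is in hand, the remaining arithmetic is routine.
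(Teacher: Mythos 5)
The paper states Theorem~\ref{basic} with a citation to~\cite{dau2018maxminsum} and supplies no proof of its own, so there is no in-paper argument to compare against; your proposal has to be judged on its own merits. For the part you actually prove --- the Steiner triple system bounds ${\sf DiffSum}(D)\geq v$ and ${\sf RatioSum}(D)\geq 2$ --- the argument is correct, and the enumeration you flagged as the main risk does close up. Writing $B_1=\{0,1,x\}$ and $B_2=\{y,v-2,v-1\}$, the difference of the two witness sums is $(2v-4)+y-x$, which is at least $v$ unless $x-y\geq v-3$; the condition $|B_1\cap B_2|\leq 1$ eliminates every such configuration except $(x,y)=(v-1,2)$ and $(x,y)=(v-3,0)$, which are exactly your two symmetric bad cases, and your auxiliary block $\{0,2,p\}$ (resp.\ the block through $\{v-3,v-1\}$) forces either ${\sf MinSum}\leq v-1$ or ${\sf MaxSum}\geq 2v$ in each branch. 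The ratio bound also goes through as claimed, since in the non-bad cases $x-y\leq v-4$ and $x\leq v-1$ give $2x-y\leq 2v-5$, i.e.\ $2v-3+y\geq 2(1+x)$. Two remarks. First, you prove only the final sentence and defer the general and $k=t+1$ inequalities to the cited source; that matches the paper's treatment, but a fully self-contained writeup would still need the short extremal-block arguments for those. Second, your local two-witnesses-plus-one-auxiliary-block argument is genuinely more elementary than the machinery the paper develops for the sharper bound ${\sf DiffSum}(D)\geq v+1$ when $v\geq 13$ (Lemma~\ref{maxxm1} and Theorem~\ref{diffp1}, which globally count the pairs coverable by low-sum triples); your method necessarily stops at $v$ because it inspects only a bounded number of blocks, whereas the paper's global pair counting is what extracts the extra $+1$.
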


Theorem~\ref{basic} provides bounds on the metrics across all Steiner systems $S(t,k,v)$ and all point labelings of them. 
In previous work, the focus has been on the {\sf MinSum} (or equivalently, by reversal, the {\sf MaxSum}). 
Dau and Milenkovic~\cite{dau2018maxminsum} use the Bose~\cite{bose1939construction} and Skolem~\cite{hanani1960note,skolem1959some} constructions of Steiner triple systems to establish the existence of an STS$(v)$ $D$ with ${\sf MinSum}(D)=v$, the largest possible by Theorem~\ref{basic} (These results are extended by Brummond~\cite{brummond2019kirkman} for Kirkman systems). They accomplish this by specifying a particular point labeling that meets the {\sf MinSum} bound, but unfortunately the labeling chosen yields a {\sf MaxSum} near $\frac{8}{3}v$, a {\sf DiffSum} near $\frac{5}{3}v$, and a {\sf RatioSum} near $\frac{8}{3}$, far from the bounds of $2v$,  $v$, and $2$, respectively. The reversal of this labeling yields a {\sf MinSum} far from optimal, the same {\sf DiffSum}, and a larger {\sf RatioSum}.

One might hope to improve the {\sf DiffSum} and  {\sf RatioSum} by choosing a different labeling or by choosing a different Steiner system $S(t,k,v)$.  
In Section~\ref{indep}, we show that certain $S(t,k,v)$s cannot meet {\sl any} of the bounds in Theorem~\ref{basic}.

\subsection{Improved bounds for STSs}
{

There is an STS$(7)$ with ${\sf MinSum} = 6$ and ${\sf MaxSum} = 13$ with blocks
$016$, $024$, $035$, $123$, $145$, $256$, and $346$ (here we write $abc$ for $\{a,b,c\}$).
There is an STS$(9)$ with ${\sf MinSum} = 9$ and ${\sf MaxSum} = 18$ with blocks
$018$, $027$, $036$, $045$, $126$, $135$, $147$, $234$, $258$, $378$, $468$, and $567$.
However, we  establish that these are the {\sl only} two Steiner triple systems with ${\sf DiffSum} = v$, and indeed the {\sl only} STS$(v)$ with ${\sf RatioSum}=2$ is the STS$(9)$. We first prove a useful lemma.

\begin{lemma}\label{maxxm1}
A $2$-$(x,3,1)$ packing on $\{0,\dots,x-1\}$ with {\sf MaxSum} $x-1$ has at most $\lfloor \phi(x)/3 \rfloor$ triples, where 
\[ \phi(x) =  \frac {x(x-1)}{4} -  \left \lfloor \frac{x}{6} \right \rfloor - 
\left \{ \begin{array}{lcccc} 
0 & \mbox{if} & x\equiv 0,1,4 & \pmod{6}\\
\frac{1}{2} & \mbox{if} & x\equiv 2,3 &\pmod{6}\\
 1& \mbox{if} & x\equiv 5 & \pmod{6}\\
 \end{array} \right . \]
\end{lemma}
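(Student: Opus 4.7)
The plan is to bound the replication number of each point in the packing by a maximum-matching computation and sum over all points.

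For each $p \in \{0,\dots,x-1\}$, every triple containing $p$ has the form $\{j,k,p\}$ with $j+k\leq x-1-p$, since the block sum is at most $x-1$. Because each pair of the partial Steiner triple system appears in at most one triple, the $r_p$ triples through $p$ correspond to $r_p$ pairwise disjoint pairs (a matching) in the graph $G_p$ on vertex set $\{0,\dots,x-1\}\setminus\{p\}$ whose edges are the pairs $\{j,k\}$ satisfying $j+k\leq x-1-p$. Hence $r_p \leq M_p$, where $M_p$ denotes the matching number of $G_p$, and summing yields $3b = \sum_p r_p \leq \sum_p M_p$.

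To compute $M_p$, I exhibit a near-perfect matching by pairing $j$ with $x-1-p-j$ for $j=0,1,\dots$; this is a matching of size $\lfloor (x-p)/2\rfloor$ in $\{0,\dots,x-1-p\}$, with the middle vertex $(x-1-p)/2$ unmatched when $x-p$ is odd. When $p\leq(x-1)/2$, the vertex $p$ is used in this canonical matching (paired with $x-1-2p$, except that $p$ is itself the middle when $p=(x-1)/3$ is integral), so removing $p$ may orphan $x-1-2p$. The orphan can be re-paired with the middle vertex precisely when $x-p$ is odd and $p\geq (x-1)/3$, because only then does the pair $\{x-1-2p,\,(x-1-p)/2\}$ sum to $(3x-3-5p)/2 \leq x-1-p$. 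Combining, $M_p = \lfloor (x-p)/2\rfloor$ when $p > (x-1)/2$ or the re-match conditions hold, and $M_p = \lfloor (x-p)/2\rfloor - 1$ otherwise.

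Finally, I evaluate $\sum_p M_p$. Since $\sum_{p=0}^{x-1}\lfloor (x-p)/2\rfloor = \lfloor x^2/4\rfloor$, the sum equals $\lfloor x^2/4\rfloor - |L|$, where $L$ is the set of \emph{lossy} points $p$. A case analysis on $x \bmod 6$ counts $|L|$ and reconciles the result with $\phi(x)$, up to a small error absorbed by the floor upon dividing by $3$. This yields $b\leq \lfloor \phi(x)/3\rfloor$. The principal obstacle is precisely this final case analysis: each residue of $x$ modulo $6$ must be handled individually to verify that the counted losses $|L|$ align exactly with the correction terms $-\lfloor x/6 \rfloor - \epsilon(x)$ in the definition of $\phi(x)$.
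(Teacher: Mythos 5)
Your strategy is genuinely different from the paper's: the paper bounds $3b$ by counting the \emph{pairs} coverable by triples of sum at most $x-1$ (there are $\lfloor x^2/4\rfloor$ pairs of sum at most $x-1$, and a pigeonhole argument on the families $P_a=\{\{a,x-1-b\}:a\le b\le 2a\}$ forces at least one uncovered pair for each $a\in\{0,\dots,\lfloor(x-2)/3\rfloor\}$), whereas you bound $3b=\sum_p r_p$ via per-point matching numbers. The outline is sound, and the identity $\sum_{p}\lfloor(x-p)/2\rfloor=\lfloor x^2/4\rfloor$ puts you in the right numerical neighborhood, but as written there are two genuine gaps.

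First, your bounds on $M_p$ point in the wrong direction for the purpose. The lemma needs \emph{upper} bounds on $r_p$, hence upper bounds on the matching number of $G_p$; exhibiting a matching proves only a \emph{lower} bound. When $p\le(x-1)/2$ and $x-p$ is even, the claim $M_p\le\lfloor(x-p)/2\rfloor-1$ does follow from counting the usable vertices $\{0,\dots,x-1-p\}\setminus\{p\}$, but when $x-p$ is odd and $p<(x-1)/3$ that vertex set has even size $x-p-1$ and the trivial bound is only $\lfloor(x-p)/2\rfloor$; your claim that $M_p$ is one smaller requires proving that \emph{no} perfect matching exists on that set, for instance by noting that the vertices exceeding $(x-1-p)/2$ form an independent set whose members can be matched only to sufficiently small vertices and running a Hall-type deficiency count. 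Arguing that one particular re-pairing fails says nothing about other matchings. Without this, several points drop out of your lossy set $L$ and the resulting bound weakens.

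Second, and decisively, the reconciliation of $\lfloor x^2/4\rfloor-|L|$ with $\phi(x)$ is exactly where the specific formula of the lemma lives, and you explicitly defer it (``the principal obstacle is precisely this final case analysis''). Spot checks are encouraging --- for $x=12$ your criterion gives $|L|=5$, hence $\sum_p M_p=36-5=31=\phi(12)$ --- but until the six residue classes are worked through and the values of $M_p$ are established as upper bounds, the statement is not proved. In its current form the proposal is a plausible program, not a proof.
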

\begin{proof}
We  determine an upper bound on the number of pairs that could appear in triples of the packing. 
In total there are $\binom{x}{2}$ pairs; of these, exactly $\lfloor \frac{x}{2} \rfloor$ have sum equal to $x-1$. 
For each pair $\{a,b\}$ with $a+b < x-1$, the pair $\{x-1-a,x-1-b\}$ has sum $2x-2-(a+b) > x-1$.  
It follows that the number of pairs with sum at most $x-1$ is $\frac{x^2}{4}$ when $x$ is even, and $\frac{x^2-1}{4}$ when $x$ is odd.
Not all of these can appear together in a packing, as follows.
Let $a \in \{0, \dots, \lfloor \frac{x-2}{3} \rfloor\}$. Consider the pairs $P_a = \{\{a,x-1-b\} : a \leq b \leq 2a\}$.
To place a pair of $P_a$ in a triple of sum at most $x-1$, the  third element must be from $\{0,\dots,a\}$, but it cannot be $a$.  
By the pigeonhole principle, at least one pair of $P_a$ cannot be in a triple of the packing, reducing the number of pairs available by  $\lfloor \frac{x+1}{3} \rfloor$.
Simple calculations now show that the number  $\phi(x)$ of pairs available is as given in the statement.
\end{proof}

\begin{theorem}\label{diffp1}
Let $D$ be a Steiner triple system of order $v\geq 13$.  Then ${\sf DiffSum}(D) \geq v+1$ and ${\sf RatioSum}(D) > 2$.
\end{theorem}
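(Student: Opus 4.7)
The plan is a proof by contradiction. Suppose some labeling ${\sf rk}$ of $D$ achieves ${\sf DiffSum}(D,{\sf rk}) = v$, and set $s = {\sf MinSum}(D,{\sf rk})$ and $M = s + v$. Theorem~\ref{basic} gives $s \leq v$ and $M \geq 2v-3$, so $s \in \{v-3, v-2, v-1, v\}$; since the reversal $\overline{\sf rk}$ preserves ${\sf DiffSum}$ and maps $s$ to $2v-3-s$, I may assume $s \in \{v-3, v-2\}$.

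First I would analyze the link $F$ of the top-ranked point $v-1$: a perfect matching of $\{0,\ldots,v-2\}$ whose pairs $\{a,b\}$ must each satisfy $a + b \leq M - (v-1) = s + 1$, since the corresponding block $\{v-1,a,b\}$ has sum at most $M$. Because $\sum_{\{a,b\} \in F}(a+b) = (v-1)(v-2)/2$ and $|F| = (v-1)/2$, the average pair sum in $F$ equals exactly $v - 2$, forcing $s \geq v - 3$. In the extremal case $s = v-3$, the average coincides with the upper bound, pinning every pair sum in $F$ to $v - 2$ and fixing $F = \{\{j, v-2-j\} : 0 \leq j \leq (v-3)/2\}$; consequently all $(v-1)/2$ blocks through $v - 1$ attain sum exactly $M$.

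Next I would invoke Lemma~\ref{maxxm1} on the residual packing $D' = D \setminus \{B : v-1 \in B\}$, a $2$-$(v-1,3,1)$ packing on $\{0,\ldots,v-2\}$ of size $(v-1)(v-3)/6$ covering every pair of $\{0,\ldots,v-2\}$ except those in $F$. Applied to the sub-packing of $D'$ of blocks with sum at most $y$ (whose points all lie in $\{0,\ldots,y-1\}$, so that after padding to $\{0,\ldots,y\}$ the lemma's hypothesis ${\sf MaxSum} \leq x-1$ holds for $x = y+1$), the lemma yields a bound of $\lfloor \phi(y+1)/3 \rfloor$; a parallel application under the intrinsic reversal of $\{0,\ldots,v-2\}$ bounds the blocks of $D'$ with sum at least $y'$. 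Together with the already-forced $(v-1)/2$ blocks at sum $M$ and the requirement that every pair of $\{0,\ldots,v-2\} \setminus F$ be covered, these bounds become numerically inconsistent once $v \geq 13$. The case $s = v-2$ runs in parallel: the link $F$ is no longer uniquely forced, but the block through $\{v-2, v-1\}$ must have third element in $\{0, 1\}$; branching on this choice and propagating forced structure through the blocks containing $\{v-3, v-1\}$, $\{v-3, v-2\}$, $\{v-4, v-1\}$, and so on, Lemma~\ref{maxxm1} applied to the resulting low-label sub-packing rules out each surviving configuration.

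The ${\sf RatioSum}$ claim then follows immediately: any labeling with ${\sf RatioSum}(D,{\sf rk}) = 2$ would satisfy $M = 2s$, hence ${\sf DiffSum}(D,{\sf rk}) = s \leq v$, violating the just-established lower bound. The main obstacle is the case $s = v - 2$, where the link of $v-1$ is not uniquely determined and the case analysis branches; tracking carefully which small-valued pairs remain available to the residual sub-packing across these branches is what allows Lemma~\ref{maxxm1} to be invoked with an ${\sf MaxSum}$ bound tight enough to yield the contradiction, and $v \geq 13$ is the precise threshold beyond which no branch survives.
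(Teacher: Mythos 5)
Your setup is sound and, up to reversal, matches the paper's opening move: the reduction to $s\in\{v-3,v-2\}$ via Theorem~\ref{basic}, and the averaging argument over the link of the top-ranked point (which pins that link to $\{\{j,v-2-j\}\}$ when $s=v-3$) is a correct, slightly sharpened form of the paper's observation that every block through the extreme point joins the low half and the high half of the label set. The final {\sf RatioSum} deduction is also fine.

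The gap is in your third paragraph, which is where the contradiction must actually come from. Stratifying the residual packing $D'$ by block sum and applying Lemma~\ref{maxxm1} to the low-sum and high-sum strata cannot produce the claimed inconsistency: the blocks of $D'$ have sums filling an interval of length about $v$ on a ground set of size $v-1$, so for any threshold $y$ the two applications of the lemma bound the block count by roughly $\phi(y)/3+\phi(3v-y)/3$, which is at least about $3v^2/8$ for every choice of $y$, comfortably above the actual $\approx v^2/6$ blocks of $D'$. The idea you are missing is the paper's mixed/pure decomposition: split the non-extreme points into a low half $V_s$ and a high half $V_\ell$ of size $x=(v-1)/2$ each. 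The forced structure of the link of the extreme point makes every block through it mixed, whence an exact count gives $x(x-1)/2$ mixed triples covering exactly half of the $x(x-1)$ pure pairs, leaving $x(x-1)/2$ pure pairs that must be covered by pure triples. Normalizing each family of pure triples to $\{0,\dots,x-1\}$ and reversing, their sums are at most about $x-1$, so Lemma~\ref{maxxm1} applies at scale $x$ rather than $v$ and caps the coverable pure pairs at $2\phi(x)<x(x-1)/2$; that collision is the whole proof, and it is tight enough that $v\in\{7,9\}$ genuinely escape. Your proposed treatment of $s=v-2$ by branching on forced blocks through $\{v-2,v-1\}$, $\{v-3,v-1\}$, etc.\ is likewise not a proof for general $v$; the paper disposes of that case uniformly with the same decomposition and the refined count $\phi(x-1)+\phi(x+1)<\frac{1}{2}x(x-1)$.
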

\begin{proof}
Let $v=2x+1$, and consider an STS$(2x+1)$ $D$ on elements $\{0,\dots,2x\}$, noting that  $x \geq 6$.  
Partition $\{0,\dots,2x\}$ into three classes $V_0 = \{0\}$, $V_s = \{1,\dots, x\}$, and $V_\ell = \{x+1,\dots,2x\}$.
Suppose to the contrary that ${\sf DiffSum}(D) = v$.  
Without loss of generality, ${\sf MinSum}(D) \in \{v-1, v\}$, for otherwise we can apply the argument to the reversal of $D$.
Let $m={\sf MinSum}(D)$  and $M= {\sf MaxSum}(D)$.
Because  $m \geq v-1$, all triples containing 0 contain one element of $V_s$ and one of $V_\ell$, as follows.
Consider the pair $\{0,w\}$ with $w \in V_s$.  The third element $y$ completing its triple satisfies $y \geq 2x-w  \geq 2x-x = x$.
Now $y \neq x$ because $\{0,x,x\}$ cannot be a triple, so
 $y > x$,  and hence $y \in V_\ell$.
 This accounts for all triples involving 0.
 
Call a pair {\em mixed} if it contains an element of $V_s$ and one from $V_\ell$, {\em pure} otherwise.
Similarly a triple is  {\em pure} if it lies entirely on $V_s$ or $V_\ell$, {\em mixed} when it has two from one and one from the other.
The number of mixed triples can be calculated as follows.
There are $x(x-1)$ mixed pairs not contained in triples containing 0, and each must be contained in a mixed triple.  
Because each mixed triple contains two mixed pairs, there are exactly $\frac{1}{2} x (x-1)$ mixed triples.
Each mixed triple covers one pure pair.
Hence the number of pure pairs to be covered by pure triples is $x(x-1) -\frac{1}{2} x (x-1) = \frac{1}{2} x (x-1)$, and there are $\frac{1}{6} x (x-1)$ pure triples.  

Form a collection ${\mathcal D}_s$ of triples on $\{0,\dots, x-1\}$ by including $\{a,b,c\}$ whenever $\{a+1,b+1,c+1\}$ is a pure triple on $V_s$, so that ${\mathcal D}_s$ contains  triples {\sl each having sum at least $m-3$}.
The reversal ${\mathcal E}_s$ then has each sum {\sl at most $3x-m$}.
Form a collection ${\mathcal D}_\ell$ of triples on $\{0,\dots, x-1\}$ by including $\{a,b,c\}$ whenever $\{2x-a,2x-b,2x-c\}$ is a pure triple on $V_\ell$, so that ${\mathcal D}_\ell$ contains triples {\sl each having sum at least $6x-M$}. 
The reversal ${\mathcal E}_\ell$ then has each sum {\sl at most $M-3x-3$}.

\noindent {\bf Case 1.}  ${\sf MinSum}(D) =v$ and hence ${\sf Maxsum}(D) = 2v$.     
 Then ${\mathcal E}_s$ and  ${\mathcal E}_\ell$ both have maximum sum at most $x-1$.
Applying Lemma~\ref{maxxm1} to ${\mathcal E}_s$ and to ${\mathcal E}_\ell$,  $D$ can contain at most $2\phi(x)$ pairs in pure triples, but $2\phi(x) < \frac{1}{2}x(x-1)$, which yields the contradiction. (Only when $v=9$ would there be  no contradiction.)

\noindent {\bf Case 2.}  ${\sf MinSum}(D) =v-1$ and hence ${\sf Maxsum}(D) = 2v-1$. Then  ${\mathcal E}_s$ has maximum sum $x$ and  ${\mathcal E}_\ell$  has maximum sum $x-2$.  
Because no  edge involving element $x-1$ can appear in a triple of  ${\mathcal E}_\ell$, the number of pairs covered by triples is at most $\phi(x-1)$ by 
Lemma~\ref{maxxm1}.  
By a similar argument, the number of pairs covered by triples of  ${\mathcal E}_s$ is at most $\phi(x+1)$ by Lemma \ref{maxxm1}. 
Because $\frac{(x-1)(x-2)}{4} + \frac{x(x+1)}{4} = \frac{x(x-1)}{2} + \frac{1}{2}$,  $\phi(x-1) + \phi(x+1) < \frac{1}{2}x(x-1)$, which yields the contradiction. (Only when $v=7$ would there be  no contradiction.)
\end{proof}

\section{Independent Sets}\label{indep}

Let $D = (V,{\mathcal B})$ be a $t$-$(v,k,\lambda)$ packing.  
An {\em independent set} in $D$ is a subset $X \subseteq V$ such that there is no $B \in \mathcal{B}$ with $B \subseteq X$. 
An independent set $I$ is {\em maximal} if there is no independent set $Y$ with $X \subset Y$, and {\em maximum} if there is no independent set $Y$ such that $|Y| > |X|$.
The \textit{independence number} of $D$, denoted $\alpha(D)$, is the size of a maximum independent set. 
There is a  close connection between the independence number of a packing and the quality of any of its labelings.
Prior to establishing this fact, we first improve the lower bounds on the {\sf DiffSum} and {\sf RatioSum} for Steiner triple systems.

\begin{lemma}
\label{IndepBoundsone}
A $t$-$(v,k,\lambda)$ packing $D$ has {\sf MinSum} at most $k \alpha(D) - \binom{k}{2}$, {\sf MaxSum} at least $k(v-1- \alpha(D)) + \binom{k}{2}$,  and {\sf DiffSum} at least $k(v+k-2- 2\alpha(D))$.
\end{lemma}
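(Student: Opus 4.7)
The plan is to prove all three bounds with a single pigeonhole-style observation: any subset of $V$ of size larger than $\alpha(D)$ cannot be independent, hence must contain some block of $D$, and the label sum of that block is then forced by which extreme labels are available in the subset.

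Let $\alpha=\alpha(D)$ and fix an arbitrary point labeling ${\sf rk}$ of $D$. For the first inequality, I take $S$ to be the set of points carrying the $\alpha+1$ smallest labels, namely those with labels in $\{0,1,\dots,\alpha\}$. Since $|S|=\alpha+1>\alpha$, the set $S$ is not independent, so some block $B\in\mathcal{B}$ satisfies $B\subseteq S$. The $k$ elements of $B$ have distinct labels drawn from $\{0,1,\dots,\alpha\}$, and their sum is maximised when they are the $k$ largest such values, giving $\sum_{i=0}^{k-1}(\alpha-i)=k\alpha-\binom{k}{2}$. Hence ${\sf sum}(B,{\sf rk})\leq k\alpha-\binom{k}{2}$, so ${\sf MinSum}(D,{\sf rk})\leq k\alpha-\binom{k}{2}$; taking the maximum over labelings yields the first bound. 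The second bound is then immediate from the identity ${\sf MaxSum}(D)=k(v-1)-{\sf MinSum}(D)$ baked into the definitions.

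For the DiffSum bound I run the same argument from both ends against the same labeling. In addition to the block $B$ above, I consider $T=\{x:{\sf rk}(x)\geq v-1-\alpha\}$, which also has size $\alpha+1$ and therefore contains some block $B'$. The labels in $B'$ lie in $\{v-1-\alpha,\dots,v-1\}$, and the sum of any $k$ of them is at least the sum of the $k$ smallest, namely $k(v-1-\alpha)+\binom{k}{2}$. Forming the difference ${\sf sum}(B',{\sf rk})-{\sf sum}(B,{\sf rk})$ and expanding $2\binom{k}{2}=k(k-1)$ gives $k(v+k-2-2\alpha)$, so ${\sf DiffSum}(D,{\sf rk})\geq k(v+k-2-2\alpha)$; minimising over ${\sf rk}$ concludes.

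There is no real obstacle here beyond bookkeeping: everything reduces to the defining property of $\alpha(D)$ together with elementary sum estimates. The only point worth checking is that the two extremal label sets $\{0,\dots,\alpha\}$ and $\{v-1-\alpha,\dots,v-1\}$ are mirror images under the reversal $i\mapsto v-1-i$, so that the low-sum and high-sum estimates line up symmetrically and combine cleanly into the stated DiffSum bound.
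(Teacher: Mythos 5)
Your proposal is correct and follows essentially the same route as the paper: the key step in both is that the $\alpha(D)+1$ points of smallest rank cannot form an independent set and hence contain a block of sum at most $k\alpha(D)-\binom{k}{2}$, with the MaxSum and DiffSum bounds obtained by applying the same observation to the highest ranks (equivalently, the reversed labeling) and combining. The paper merely compresses what you spell out, asserting that it suffices to prove the MinSum case.
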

\begin{proof}
It suffices to prove the statement for {\sf MinSum}.  
No matter how $D$ is given a point labeling, on elements with ranks in $\{0,\dots,\alpha(D)\}$, there is a block.  The sum of this block is at most $\sum_{i=1}^{k} (\alpha(D)-(i-1))$.
\end{proof}

\begin{corollary}\label{oneindep}
Meeting the bound on {\sf MinSum} in Theorem \ref{basic}   for a $t$-$(v,k,1)$ packing $D$ requires that 
\[ \alpha(D)  \geq \frac{v(k-t+1)}{2k} +\frac{k+t-3}{2}.\]
\end{corollary}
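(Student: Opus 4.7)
The plan is to combine the elementary upper bound on {\sf MinSum} from Theorem~\ref{basic} with the independence-based upper bound on {\sf MinSum} from Lemma~\ref{IndepBoundsone}, and then solve for $\alpha(D)$. No nontrivial combinatorial argument is needed beyond what these two results already supply; the corollary is essentially a consequence of writing down both bounds and observing that they are compatible only when $\alpha(D)$ is large enough.

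First I would assume that $D$ is a $t$-$(v,k,1)$ packing meeting the bound on {\sf MinSum} given in Theorem~\ref{basic}, so that
\[
{\sf MinSum}(D) \;=\; \tfrac{1}{2}\bigl(v(k-t+1) + k(t-2)\bigr).
\]
Next I would apply Lemma~\ref{IndepBoundsone}, which yields the independent upper bound ${\sf MinSum}(D) \leq k\,\alpha(D) - \binom{k}{2}$ regardless of the labeling chosen. Chaining these two gives
\[
\tfrac{1}{2}\bigl(v(k-t+1) + k(t-2)\bigr) \;\leq\; k\,\alpha(D) - \binom{k}{2}.
\]

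Finally I would divide by $k$ and rewrite $\binom{k}{2}/k = (k-1)/2$, collecting the constant terms $(t-2)/2 + (k-1)/2 = (k+t-3)/2$ to obtain exactly
\[
\alpha(D) \;\geq\; \frac{v(k-t+1)}{2k} + \frac{k+t-3}{2},
\]
which is the claimed inequality. Since the argument is just an algebraic rearrangement of two previously established bounds, there is no real obstacle; the only care needed is to track the constants correctly so that $(t-2)/2 + (k-1)/2$ simplifies to $(k+t-3)/2$.
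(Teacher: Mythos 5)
Your proposal is correct and is exactly the derivation the paper intends: the corollary is stated without proof as an immediate consequence of chaining the upper bound ${\sf MinSum}(D) \leq k\,\alpha(D) - \binom{k}{2}$ from Lemma~\ref{IndepBoundsone} with the target value $\tfrac{1}{2}(v(k-t+1)+k(t-2))$ from Theorem~\ref{basic}, and your algebra checks out.
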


For example, Corollary \ref{oneindep} states that a necessary condition for a partial Steiner triple system $D$ to have {\sf MinSum} equal to $v$ is that  $\alpha(D) \geq \frac{v}{3} +1$.

A single maximum  independent set yields an upper bound on the {\sf MinSum} or a lower bound on the {\sf MaxSum}, but cannot be used simultaneously 
for both. We prove this next.

Suppose that a $t$-$(v,k,\lambda)$ packing $D$ contains two {\sl disjoint}  independent sets of sizes  $\gamma_D$ and $\delta_D$, respectively, with $\gamma_D \geq \delta_D$.  
Set \[ \begin{array}{rcl}
\gamma_D' & =& \min \left (\gamma_D,\frac{v(k-t+1)}{2k} +\frac{k+t}{2}-1 \right ), \\
\delta_D' & =&  \min\left (\delta_D,\frac{v(k-t+1)}{2k} +\frac{k+t}{2}-1\right ). \end{array} \]
Two independent sets form a {\em maximum independent pair} when $\gamma_D' + \delta_D'$ 
is as large as possible.

\begin{lemma}
\label{IndepBoundstwo}
A $t$-$(v,k,\lambda)$ packing $D$ with a maximum independent pair of sizes $(\gamma_D,\delta_D)$ has  {\sf DiffSum} at least $k(v+k-2- \delta_D'-\gamma_D')$.
\end{lemma}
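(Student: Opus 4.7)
The strategy is to apply the prefix independent-set argument of Lemma~\ref{IndepBoundsone} at both the low and high ends of the ranking simultaneously, and then to use the maximality of the pair $(I_1,I_2)$ to compare the two resulting prefix sizes against $\gamma_D'+\delta_D'$.

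Fix any point labeling $\sf rk$ of $D$ and order the points $e_0,\ldots,e_{v-1}$ so that ${\sf rk}(e_i)=i$. Let $s$ be the smallest integer for which the low prefix $L_s:=\{e_0,\ldots,e_s\}$ contains a block, and let $s'$ be the smallest integer for which the high suffix $R_{s'}:=\{e_{v-1-s'},\ldots,e_{v-1}\}$ contains one. The block in $L_s$ has all of its elements in ranks $\{0,\ldots,s\}$, so its sum is at most $ks-\binom{k}{2}$; analogously, the block in $R_{s'}$ has sum at least $k(v-1-s')+\binom{k}{2}$. Subtracting yields
\[
{\sf DiffSum}(D,{\sf rk}) \ \geq\ k(v+k-2-s-s').
\]

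By the minimality of $s$ and $s'$, the truncated prefixes $L_{s-1}$ and $R_{s'-1}$ are independent sets of sizes $s$ and $s'$, and they are disjoint whenever $s+s'\leq v$. Writing $C=\tfrac{v(k-t+1)}{2k}+\tfrac{k+t}{2}-1$ for the cap appearing in the definitions of $\gamma_D'$ and $\delta_D'$, the defining property of a maximum independent pair, applied to the disjoint independent pair $(L_{s-1},R_{s'-1})$, gives
\[
\min(s,C)+\min(s',C)\ \leq\ \gamma_D'+\delta_D'.
\]
In the principal case $s\leq C$ and $s'\leq C$, this reduces to $s+s'\leq\gamma_D'+\delta_D'$; substituting into the previous display finishes the proof of the lemma.

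The main obstacle is the residual regime in which one of $s,s'$ exceeds $C$, or in which $s+s'>v$ so that $L_{s-1}$ and $R_{s'-1}$ overlap. If for instance $s>C$, then $L_{s-1}$ still contains the independent set $L_C$ of capped size $C=\min(C+1,C)$, and the plan is to apply the pair-maximality to $(L_C,R_{s'-1})$, extract a one-sided constraint such as $\min(s',C)\leq\gamma_D'+\delta_D'-C$, and then verify by direct calculation that this together with the MinSum bound at the genuine threshold $s$ still delivers the target estimate (using the elementary bound of Theorem~\ref{basic} where it applies to close the final gap); the case $s'>C$ is handled symmetrically. The subcase $s+s'>v$ only arises when $\gamma_D'+\delta_D'$ is near $v$, in which regime the target bound $k(v+k-2-\gamma_D'-\delta_D')$ is non-positive and thus trivially satisfied. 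The combinatorial crux throughout is that the two disjoint independent prefixes produced by any labeling cannot, after capping, exceed the fixed maximum $\gamma_D'+\delta_D'$ of the pair guaranteed by the hypothesis.
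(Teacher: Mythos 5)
Your main argument is the paper's: locate the shortest rank-prefix and rank-suffix that each contain a block, convert those thresholds into {\sf MinSum}/{\sf MaxSum} bounds exactly as in Lemma~\ref{IndepBoundsone}, and then play the two independent prefixes off against the maximality of the pair via the cap $C$. In the principal case $s,s'\leq C$ with the prefixes disjoint, your derivation is complete and is the same computation the paper runs (contrapositively).

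The residual cases, which you correctly flag as the main obstacle, are where the proposal stops being a proof. First, when $s>C$ you propose to retain ``the {\sf MinSum} bound at the genuine threshold $s$''; that bound, ${\sf MinSum}\leq ks-\binom{k}{2}$, only weakens as $s$ grows and cannot deliver the target. The correct move --- which the paper builds in from the outset by \emph{defining} its threshold $c$ to equal the cap whenever no block appears on a prefix of that length --- is to discard $s$ and use the unconditional bound ${\sf MinSum}\leq kC-\binom{k}{2}$ implied by Theorem~\ref{basic}, after which $C+\min(s',C)\leq \gamma_D'+\delta_D'$ closes the case; you gesture at Theorem~\ref{basic} but leave this as ``the plan.'' Second, your dismissal of the overlap subcase $s+s'>v$ is wrong as stated: if both $s,s'\leq C$ there, then taking $L_{s-1}$ together with $R_{s'-1}\setminus L_{s-1}$ (sizes $s$ and $v-s$, both at most $C$) shows $\gamma_D'+\delta_D'\geq v$, so the target $k(v+k-2-\gamma_D'-\delta_D')$ is at most $k(k-2)$ --- positive for $k\geq 3$, not non-positive. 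What actually disposes of this subcase is that $\gamma_D'+\delta_D'\geq v$ forces $2C\geq v$, i.e.\ $v\leq k(k+t-2)/(t-1)$, so for all larger $v$ an overlap can only occur with $s>C$ or $s'>C$ and is absorbed by the capped case above. (In fairness, the paper's own proof is equally silent about disjointness.)
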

\begin{proof}
Form a point labeling of $D$ in which the smallest $x$ for which a block appears on $\{0,\dots,x-1\}$ also has a block on $\{v-x,\dots,v-1\}$; reverse the labeling if necessary to do this.
Suppose to the contrary that this labeling has {\sf DiffSum} less than  $k(v+k-2- \delta_D'-\gamma_D')$.
If no block appears on $\{ 0,  \dots, \frac{v(k-t+1)}{2k} +\frac{k+t}{2}-2\}$ set $c = \frac{v(k-t+1)}{2k} +\frac{k+t}{2}-1$.  Otherwise  let $c$ be the smallest value for which  a block appears on $\{0,\dots,c\}$, so that $\{0,\dots,c-1\}$ forms an independent set.  
Proceed similarly to select $d$ so that $\{v-d,\dots,v-1\}$ is an independent set.  
This labeling has {\sf MaxSum} at least $k(v-1- d) + \binom{k}{2}$ and {\sf MinSum} at most $k c - \binom{k}{2}$, based on Theorem~\ref{basic} and Lemma~\ref{IndepBoundsone}.
It follows that the {\sf DiffSum} for this labeling is at least $k(v+k-2- d-c)$, so we have $c+d > \gamma_D' + \delta_D'$.
This contradicts the requirement that the maximum independent pair have sizes $(\gamma_D,\delta_D)$.
\end{proof}

\begin{corollary}\label{twoindep}
Meeting the bound on {\sf DiffSum} in Theorem \ref{basic}  for a $t$-$(v,k,1)$ packing $D$ requires that $D$ have a maximum independent pair of sizes 
$( \frac{v(k-t+1)}{2k} +\frac{k+t}{2}-1,\frac{v(k-t+1)}{2k} +\frac{k+t}{2}-1)$.
\end{corollary}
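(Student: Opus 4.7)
The plan is to derive this corollary directly from Lemma~\ref{IndepBoundstwo} by setting ${\sf DiffSum}(D)$ equal to the Theorem~\ref{basic} value $(v-k)(t-1)$ and solving for the capped sum. Writing $(\gamma_D, \delta_D)$ for the sizes of a maximum independent pair of $D$ and $(\gamma_D', \delta_D')$ for their capped versions, Lemma~\ref{IndepBoundstwo} yields
$$(v-k)(t-1) \;=\; {\sf DiffSum}(D) \;\ge\; k\bigl(v + k - 2 - \gamma_D' - \delta_D'\bigr).$$
A short algebraic manipulation, pulling the $(v-k)(t-1)/k$ term to the other side and rewriting $v+k-2-v(t-1)/k+(t-1)$, gives
$$\gamma_D' + \delta_D' \;\ge\; \frac{v(k-t+1)}{k} + k + t - 3.$$

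The second step is to pair this lower bound with the trivial upper bound coming from the definition of the caps: each of $\gamma_D'$ and $\delta_D'$ is by construction at most $\frac{v(k-t+1)}{2k} + \frac{k+t}{2} - 1$, so their sum is at most $\frac{v(k-t+1)}{k} + k + t - 2$. The two bounds sandwich $\gamma_D' + \delta_D'$ into a window of width one. Using the integrality of set sizes, this forces both $\gamma_D'$ and $\delta_D'$ to attain the capped value, and unwinding the definition of the cap gives two disjoint independent sets in $D$ of the asserted common size, as required.

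The main point calling for a little care is the integrality of the cap itself: when $k \nmid v(k-t+1)$ the cap is a half-integer, and the genuine upper bound on $\gamma_D'$ is its integer floor. This is the only step that is not pure algebra, and it is handled in the same spirit as Corollary~\ref{oneindep}, where the analogous half-integer appears; in the applications of principal interest (such as $k = t+1$ with suitable divisibility of $v$) the arithmetic is clean and the conclusion holds as stated. Beyond this bookkeeping, the argument is simply substitution and rearrangement on top of Lemma~\ref{IndepBoundstwo}.
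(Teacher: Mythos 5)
Your overall route is the paper's own: Corollary~\ref{twoindep} is presented there as an immediate consequence of Lemma~\ref{IndepBoundstwo}, obtained exactly as you do by setting ${\sf DiffSum}(D)=(v-k)(t-1)$ and solving for $\gamma_D'+\delta_D'$. Your algebra is correct: writing $c=\frac{v(k-t+1)}{2k}+\frac{k+t}{2}-1$ for the cap, the lemma yields $\gamma_D'+\delta_D'\geq \frac{v(k-t+1)}{k}+k+t-3=2c-1$, while the caps give $\gamma_D'+\delta_D'\leq 2c$.

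The genuine gap is your final step. A window of width one together with integrality does \emph{not} force $\gamma_D'=\delta_D'=c$: the value $\gamma_D'+\delta_D'=2c-1$ is perfectly consistent with everything derived (take, say, $\gamma_D'=c$ and $\delta_D'=c-1$), and one can check that substituting $\gamma_D'+\delta_D'=2c-1$ back into Lemma~\ref{IndepBoundstwo} returns a {\sf DiffSum} lower bound of exactly $(v-k)(t-1)$, i.e.\ it does not contradict meeting the bound of Theorem~\ref{basic}. So the argument cannot be closed by integrality; what actually follows from the lemma is only that a maximum independent pair satisfies $\gamma_D'+\delta_D'\geq 2c-1$, hence has capped sizes at least $(c,\,c-1)$. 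This one-unit slack is inherited from the statement of Corollary~\ref{twoindep} itself (the paper gives no separate proof, and elsewhere glosses the requirement for Steiner triple systems as ``two disjoint independent sets, each of size at least $\frac{v}{3}+1$,'' already rounding the half-integer cap), so you have correctly reconstructed the intended derivation; but the sentence claiming that integrality forces both $\gamma_D'$ and $\delta_D'$ to equal the cap is false as written and should be replaced either by the weaker (and provable) conclusion $\gamma_D'+\delta_D'\geq 2c-1$ or by an explicit remark that the corollary holds up to this additive unit. Your observation about the cap being a half-integer when $k\nmid v(k-t+1)$ is a separate, valid bookkeeping point, handled the same way as in Corollary~\ref{oneindep}, but it does not repair the main step.
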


For a pair of independent sets to be maximum in this context, there is no requirement that either be maximum, 
nor is it required that their combined size be as large as possible.
For a Steiner triple system, for example, Corollary \ref{twoindep} asks only for two disjoint independent sets, each of size at least $\frac{v}{3} +1$, for a combined size of $\frac{2v}{3} +2$.
Applying the $2v+1$ construction \cite{colbourn1999triple} twice to an STS$(v)$, we form an STS$(4v+3)$ having a maximum independent pair of sizes $(2v+2,v+1)$; despite the fact that the combined size is over $\frac{3}{4}$ of the size of the STS, such a pair could not lead to a {\sf DiffSum} that meets the bound of 
Theorem~\ref{basic}, because the second largest of the pair is too small.

Corollary~\ref{twoindep} gives a necessary condition, not a sufficient one.  Nevertheless, some bounds on the metrics can be stated.

\begin{lemma}\label{twosuff} 
When a $t$-$(v,k,1)$ packing $D$ has two disjoint independent sets of sizes $\alpha$ and $\beta$, there is a point labeling with ${\sf MinSum}(D) \geq \alpha + \binom{k}{2}$ and (for the same labeling) ${\sf MaxSum}(D) \leq k(v-1) - \beta-1 - \binom{k}{2}$, so ${\sf DiffSum}(D) \leq k(v-k) - \beta  - \alpha -1 $.
\end{lemma}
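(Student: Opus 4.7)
The plan is to construct a single point labeling ${\sf rk}$ that simultaneously realizes both inequalities. Assign the $\alpha$ smallest labels $\{0, 1, \ldots, \alpha-1\}$ to the elements of $S_\alpha$, the $\beta$ largest labels $\{v-\beta, \ldots, v-1\}$ to the elements of $S_\beta$, and the $v - \alpha - \beta$ middle labels $\{\alpha, \ldots, v-\beta-1\}$ to the remaining elements of $V$, with the bijections within each group chosen carefully as described below.

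To establish the MinSum bound, the key observation is that because $S_\alpha$ is independent, no block of $D$ is contained in $S_\alpha$; every block therefore has at least one element with label at least $\alpha$, while the remaining $k-1$ elements of the block contribute $k-1$ distinct smaller labels summing to at least $\binom{k-1}{2}$. This yields a baseline block sum of at least $\alpha + \binom{k-1}{2}$; to reach the sharper bound $\alpha + \binom{k}{2}$, I would select the bijection within $S_\alpha$ together with the element of $V \setminus S_\alpha$ assigned label $\alpha$ in a way that avoids the extremal label configuration $\{0,1,\ldots,k-2,\alpha\}$ (and the next few worst configurations) from being realized by a block. This is possible because in a $t$-$(v,k,1)$ packing each $(k-1)$-subset lies in at most one block, which restricts the number of low-sum block profiles that must be avoided. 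The symmetric argument using $S_\beta$ at the top of the label order gives the MaxSum upper bound.

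Subtracting the two bounds yields
\[ {\sf DiffSum}(D,{\sf rk}) \leq \Bigl[k(v-1) - \beta - 1 - \binom{k}{2}\Bigr] - \Bigl[\alpha + \binom{k}{2}\Bigr] = k(v-k) - \alpha - \beta - 1, \]
using the identity $k(v-1) - 2\binom{k}{2} = k(v-k)$, and ${\sf DiffSum}(D) \leq {\sf DiffSum}(D,{\sf rk})$ follows immediately. The main technical hurdle is verifying the compatibility of the two refinements---one at the bottom of the label order for the MinSum bound and the symmetric one at the top for the MaxSum bound---within a single labeling. This is manageable because the refinements act on the disjoint sets $S_\alpha$ and $S_\beta$ and operate on disjoint label ranges, so the two avoidance conditions can be imposed independently of one another.
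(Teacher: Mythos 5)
Your labeling is exactly the one the paper uses, and you have correctly spotted the weak point: if all one knows is that $S_\alpha$ is independent, then an arbitrary labeling of this form only guarantees that every block carries at least one label $\geq \alpha$, hence has sum at least $\alpha + \binom{k-1}{2}$, which falls short of the claimed $\alpha + \binom{k}{2}$ by $k-1$. (The paper's own proof simply asserts that \emph{any} such labeling meets the stated bounds, which is not true.) The gap in your proposal is the repair step: the claim that the bijections within $S_\alpha$ and within $V\setminus S_\alpha$ can always be chosen so that no block realizes a label set of sum below $\alpha+\binom{k}{2}$ is asserted rather than proved, and it is the entire content of the lemma beyond the trivial bound. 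Observing that each $(k-1)$-subset lies in at most one block limits the number of bad configurations, but it does not show that all of them can be dodged simultaneously, particularly when few points are available to receive the labels $\alpha, \alpha+1, \dots$ (or, on the other side, the labels $v-\beta-1, v-\beta-2,\dots$).

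In fact no choice of bijections can work in general, because the stated bounds are not always achievable. Take the STS$(7)$ of Section~\ref{pointlabel} with blocks $016$, $024$, $035$, $123$, $145$, $256$, $346$, and the disjoint independent sets $S_\alpha=\{0,1,2\}$ and $S_\beta=\{3,4,5\}$, so $\alpha=\beta=3$, $k=3$, $v=7$. The lemma would give ${\sf DiffSum}(D)\leq 3(7-3)-3-3-1=5$, contradicting ${\sf DiffSum}(D)\geq v=7$ from Theorem~\ref{basic}. Concretely, element $6$ is the unique point outside $S_\alpha\cup S_\beta$ and is forced to receive label $3$, whereupon the block $346$ has label sum at least $4+5+3=12$, exceeding the required ${\sf MaxSum}$ bound $k(v-1)-\beta-1-\binom{k}{2}=11$ no matter how labels are permuted inside the two independent sets. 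So the avoidance you hope for is impossible without additional hypotheses (say, $v$ large relative to $k$ and enough points outside $S_\alpha\cup S_\beta$). A correct argument must either add such hypotheses and carry out the avoidance explicitly, or settle for the weaker bounds ${\sf MinSum}(D)\geq \alpha+\binom{k-1}{2}$ and ${\sf MaxSum}(D)\leq k(v-1)-\beta-\binom{k-1}{2}$, which \emph{do} follow from your (and the paper's) labeling with no further work.
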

\begin{proof}
Any point labeling assigning labels $\{0,\dots,\alpha-1\}$ to the points of the independent set of size $\alpha$,
labels $\{v-\beta, \dots, v-1\}$ to the points of the independent set of size $\beta$,
and labels $\{\alpha, \dots,v-\beta-1\}$ to the remaining points, meets the stated bounds.
\end{proof}

A Steiner system $S(t,k,v)$ is {\em $2$-chromatic} if its elements can be partitioned into two classes, both being independent sets.  
When a 2-chromatic $S(3,4,v)$ $D$ exists (see, for example, \cite{DoyenV71,Ji2007,PhelpsRosa80}), Lemma \ref{twosuff} establishes that ${\sf DiffSum}(D) \leq 3v -17 $.  

Recall that Dau and Milenkovic~\cite{dau2018maxminsum} use the Bose and Skolem   constructions of Steiner triple systems.
In retrospect, this choice is well-justified because the Bose construction leads to maximum independent pairs of sizes $(\frac{v}{3} + 1, \frac{v}{3} +1)$ when $v \equiv 3 \pmod{6}$ and the Skolem construction leads to maximum independent pairs of sizes $(\frac{v+2}{3} + 1, \frac{v+2}{3} +1)$ when $v \equiv 1 \pmod{6}$. 

One must hence focus on Steiner triple systems, and on $t$-$(v,k,1)$ packings in general, having large sizes in maximum independent pairs.  
This choice is important, because not all such systems have even a single large independent set, as we explain next.  

\section{Small Maximum Independent Sets}\label{sec:mis}

Can one choose an arbitrary $t$-$(v,k,1)$ packing, and by cleverly choosing a point labeling optimize one or more of the sum metrics?
If not, how far from the bound of Theorem~\ref{basic} can the best point labeling be?
In order to discuss these questions, define \[ \begin{array}{ccl}
\alpha_{min}(t,k,v) &=& \min \{ \alpha(D): D \mbox{ is a }t\mbox{-}(v,k,1)\mbox{ packing}\}\mbox{, and}\\
\alpha_{min}^{\star}(t,k,v) &=& \min \{ \alpha(D): D \mbox{ is an } S(t,k,v) \}.
\end{array}
 \]
When an $S(t,k,v)$ exists, $\alpha_{min}(t,k,v) \leq \alpha_{min}^{\star}(t,k,v)$.  

Erd\H{o}s and Hajnal \cite{erdos} establish that $\alpha_{min}(2,3,v)\geq \lfloor \sqrt{2v}\rfloor$; indeed a simple greedy algorithm produces an independent set of this size.

A $t$-$(v,k,1)$ packing has each element in at most $\binom{v-1}{t-1}/\binom{k-1}{t-1} = \prod_{i=1}^{t-1} \frac{v-i}{k-i}$ blocks.  
Applying a result of Spencer \cite{spencer}  generalizing  Tur\'an's theorem for graphs, 
we obtain \[ \alpha_{min}(t,k,v) \geq c_k \frac{v}{\left( \prod_{i=1}^{t-1} \frac{v-i}{k-i} \right )^\frac{1}{k-1}} \]
for $c_k$ a constant independent of $v$.
For partial Steiner triple systems, this asserts that $\alpha_{min}(2,3,v)\geq c \cdot v \sqrt{2}/\sqrt{v-1}$, a small improvement on the Erd\H{o}s-Hajnal result.

State-of-the-art lower bounds  rely heavily on the following theorem, and  all differ only by constant factors. 
\begin{theorem} {\rm \cite{ajtai}}
Let $\kappa \geq 2$ be a fixed integer. 
Let $G$ be a $(\kappa+1)$-uniform hypergraph on $n$ vertices. 
Then there are constants $t_0(\kappa)$ and $n_0(\kappa,\tau)$ so that whenever
\begin{enumerate}
    \item $G$ is uncrowded (i.e., has no $2$-, $3$-, or $4$- cycles);
    \item the maximum degree $\Delta(G)$ satisfies $\Delta(G) \leq \tau^\kappa$ where $\tau \geq t_0(\kappa)$; and
    \item $n \geq n_0(\kappa,\tau)$, 
\end{enumerate}
one has that 
$$\alpha(G) \geq \frac{.98}{e} \cdot 10^{-5/\kappa} \cdot \frac{n}{\tau} \cdot (\ln \tau)^{1/\kappa}.$$
\end{theorem}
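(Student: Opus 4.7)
The plan is to prove this by the semi-random (``nibble'') method of Ajtai--Koml\'os--Pintz--Spencer--Szemer\'edi. The independent set is assembled over $T \approx \ln \tau$ rounds; at round $i$ one works with a surviving hypergraph $G_i$ of maximum degree at most $\tau_i^\kappa$, samples each surviving vertex independently with probability $p_i \approx \epsilon/\tau_i^\kappa$ for a small constant $\epsilon = \epsilon(\kappa) > 0$, commits the sampled vertices to the growing independent set $I$ after deleting one endpoint from every edge fully contained in the sample, and then prunes any surviving vertex whose degree into the sample is atypically large. For the first round this is concrete: $G$ has at most $n \tau^\kappa/(\kappa+1)$ edges, so the expected number of edges wholly inside the sample $V'$ is $O(\epsilon^{\kappa+1} n / \tau)$, much smaller than $\mathbb{E}|V'| = \epsilon n / \tau^\kappa$, and one gains $\Omega(n/\tau^\kappa)$ new independent vertices while each surviving vertex's expected degree drops by a factor of roughly $1 - \Theta(\epsilon)$.

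Next I would iterate and telescope. Setting $n_{i+1}/n_i \approx 1 - \Theta(p_i \tau_i^\kappa)$ and $\tau_{i+1}^\kappa / \tau_i^\kappa \approx 1 - \Theta(\epsilon)$, the max degree decays geometrically, so after $T = \Theta(\ln \tau)$ rounds one reaches $\tau_T = O(1)$; at that point a greedy Tur\'an-type argument on the residual hypergraph closes out the construction. The per-round gain is of order $n_i / \tau_i^\kappa$, and summing along the degree-decay trajectory yields a total independent set of size $\Omega\bigl( (n/\tau)(\ln \tau)^{1/\kappa} \bigr)$. The $(\ln \tau)^{1/\kappa}$ factor arises because, after stripping off one factor of $\tau$ from the degree budget, the remaining ``area under the curve'' of $n_i/\tau_i^\kappa$ introduces a logarithmic factor whose exponent is governed by the hypergraph arity $\kappa+1$.

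The technical heart---and the main obstacle---is showing that each round concentrates around its expectation with overwhelming probability. One must simultaneously control (a) the number of vertices committed to $I$, (b) the maximum surviving degree after removal of sampled vertices and their incident edges, and (c) the continued approximate absence of short-cycle configurations that make subsequent rounds analyzable. The uncrowded hypothesis is used precisely here: without $2$-, $3$-, or $4$-cycles, the events ``edge $e$ is killed in round $i$'' have small enough pairwise correlation that Janson or Talagrand concentration applies to polynomials of the random sample (linear for vertex counts, quadratic for degrees, higher-degree for short-cycle counts); short cycles would create correlated clusters of edge deletions that defeat these inequalities. Pinning down the explicit constants $t_0(\kappa)$ and $n_0(\kappa,\tau)$, and the universal coefficient $0.98/e$ in the final bound, amounts to optimizing $\epsilon$ against the union bound over the $T$ rounds together with the Talagrand deviation estimates, and this delicate bookkeeping is where essentially all the technical effort lies.
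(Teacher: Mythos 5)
First, note that the paper does not prove this theorem at all: it is quoted from Ajtai, Koml\'os, Pintz, Spencer and Szemer\'edi \cite{ajtai} and used as a black box in \S\ref{sec:mis}, so there is no in-paper argument to compare yours against. Judged on its own terms, your proposal correctly names the semi-random (nibble) method behind the original proof, but it is an outline rather than a proof, and its parameters are internally inconsistent. With sampling probability $p_i \approx \epsilon/\tau_i^{\kappa}$, the expected number of vertices committed per round is $\epsilon n_i/\tau_i^{\kappa}$, and an edge at a surviving vertex is destroyed with probability only about $\kappa\epsilon/\tau_i^{\kappa}$, so the maximum degree decays by a factor $1-\Theta(\epsilon/\tau_i^{\kappa})$ per round, not $1-\Theta(\epsilon)$; with these choices $\Theta(\ln\tau)$ rounds neither drive $\tau_i$ down to $O(1)$ nor accumulate $(n/\tau)(\ln\tau)^{1/\kappa}$ independent vertices. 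The requirement that the sample contain few edges is $|E|\,p^{\kappa+1}\ll np$, which for $|E|\le n\tau^{\kappa}/(\kappa+1)$ forces $p\ll c/\tau$ --- the rate the argument actually needs. Relatedly, your bound $O(\epsilon^{\kappa+1}n/\tau)$ on the number of edges inside the sample is \emph{not} smaller than your $\mathbb{E}|V'|=\epsilon n/\tau^{\kappa}$ when $\kappa\ge 2$, so the comparison you draw does not follow from the estimates you state.

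Second, and more importantly, the part you explicitly defer --- proving that the surviving degrees concentrate, that (approximate) uncrowdedness persists from round to round, and carrying out the union bound that yields the explicit constants $t_0(\kappa)$, $n_0(\kappa,\tau)$ and $\frac{.98}{e}\cdot 10^{-5/\kappa}$ --- is the entire technical content of the theorem; without it one has a heuristic, not a proof. (As an aside, the 1982 original predates Talagrand's inequality and relies on second-moment and deletion arguments rather than the concentration machinery you invoke.) For the purposes of this paper the correct move is the one the authors make: cite \cite{ajtai} (or the later expositions such as \cite{duke1995uncrowded}) rather than reprove the result; if you do intend to reprove it, the quantitative bookkeeping must actually be carried out.
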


In order to apply this result to all $t$-$(v,k,1)$ packings, typically one selects a large subset of the blocks that are uncrowded.
The approach is used to establish the lower bound in the following, while the upper bound is shown using the Lov\'asz Local Lemma:

\begin{theorem}\label{GeneralBound} {\rm \cite{duke1995uncrowded,RodlS94}}
For fixed $k$ and $t$, there are absolute constants $c$ and $d$ for which
\[
    cv^{\frac{k-t}{k-1}}(\log v)^{\frac{1}{k-1}} \leq  \alpha_{min}(t,k,v) \leq dv^{\frac{k-t}{k-1}}(\log v)^{\frac{1}{k-1}}, \]
\end{theorem}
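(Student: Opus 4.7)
The plan is to establish the two inequalities separately, using Ajtai's theorem for the lower bound and the Lov\'asz Local Lemma for the upper bound. In both arguments the parameters are driven by the trivial degree bound $\Delta = O(v^{t-1})$ and the target independence number of order $v^{(k-t)/(k-1)}(\log v)^{1/(k-1)}$; these pair naturally, since with $\kappa = k-1$ and $\tau$ of order $v^{(t-1)/(k-1)}$, the quantity $\tau^\kappa$ matches $\Delta$ and $n/\tau$ matches the target up to the required logarithmic factor.

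For the lower bound, let $D=(V,\mathcal{B})$ be a $t$-$(v,k,1)$ packing, viewed as a $k$-uniform hypergraph $G$ on $v$ vertices. Since each element lies in at most $\binom{v-1}{t-1}/\binom{k-1}{t-1} = O(v^{t-1})$ blocks, $\Delta(G) = O(v^{t-1})$. Ajtai's theorem would give the claim directly, but $G$ need not be uncrowded. I would bound the number of $2$-, $3$-, and $4$-cycles of $G$ using the $t$-packing condition---that at most one block passes through each $t$-subset sharply restricts configurations of $j$ pairwise overlapping blocks---and delete one block per short cycle to produce an uncrowded sub-hypergraph $G'$ while losing only $o(|\mathcal{B}|)$ blocks. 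An independent set $I'$ of $G'$ furnished by Ajtai's theorem may still contain a few of the deleted blocks of $G$; discarding one vertex per such contained block leaves a set $I \subseteq V$ of size $(1-o(1))|I'|$ that is genuinely independent in $G$, giving $\alpha(D) = \Omega(v^{(k-t)/(k-1)}(\log v)^{1/(k-1)})$.

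For the upper bound, I would construct, by the probabilistic method, a $t$-$(v,k,1)$ packing $D^*$ with $\alpha(D^*) = O(v^{(k-t)/(k-1)}(\log v)^{1/(k-1)})$. In an auxiliary $k$-uniform hypergraph $H$, include each $k$-subset of $[v]$ independently with probability $p$ of the form $c v^{t-k}$ for a small constant $c$; then enforce the packing condition by deleting one block from each pair sharing a $t$-subset, producing $D^*$ with $\Theta(v^t)$ surviving blocks with high probability. Set $s = dv^{(k-t)/(k-1)}(\log v)^{1/(k-1)}$ with $d$ a sufficiently large constant, and for each $S \in \binom{[v]}{s}$ let $B_S$ be the event that no block of $D^*$ lies inside $S$. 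Since $S$ contains $\binom{s}{k}$ candidate blocks each surviving with probability $\Theta(p)$, a direct estimate yields $\Pr[B_S] \leq \exp(-\Omega(p s^k)) = \exp(-\Omega(s \log v))$. Because $B_S$ depends only on candidate blocks meeting $S$, the LLL dependency degree is at most $v^{O(1)}$, so for $d$ large enough the Lov\'asz Local Lemma yields $\Pr[\bigcap_S \overline{B_S}] > 0$ and some realization of $D^*$ meets the bound.

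The principal obstacle is the careful counting of multiply intersecting block configurations, which appears in both directions. In the lower bound, verifying that the number of short cycles is $o(|\mathcal{B}|)$ requires classifying overlap patterns of $j \leq 4$ blocks in a $t$-packing, the counts being controlled by the fact that every shared $t$-subset determines a unique block. In the upper bound, the analogous counting bounds how many other events $B_{S'}$ a fixed $B_S$ can depend on, which is needed to verify the LLL condition. Beyond this bookkeeping the arguments are standard probabilistic proofs, but carrying them out precisely enough to match the stated exponents $(k-t)/(k-1)$ and $1/(k-1)$ is the heart of the proof.
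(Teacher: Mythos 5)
First, a point of reference: the paper does not prove this theorem---it is quoted from the cited references, and the surrounding text only indicates the strategy (pass to an uncrowded sub-hypergraph and apply the Ajtai et al.\ theorem for the lower bound; use the Lov\'asz Local Lemma for the upper bound). Your sketch follows that same general strategy, so the comparison is really against the proofs in \cite{duke1995uncrowded,RodlS94}. Against those, your lower-bound plan contains a step that fails: the number of short cycles in a $t$-$(v,k,1)$ packing is \emph{not} $o(|\mathcal{B}|)$. Already for $t=2$, $k=3$, a Steiner triple system has $\Theta(v^2)$ blocks but $\Theta(v^3)$ triangles ($3$-cycles) and $\Theta(v^4)$ four-cycles, and for $t\geq 3$ even the number of $2$-cycles (pairs of blocks meeting in at least two points) is $\Theta(v^{2t-2}) \gg v^t$. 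So ``delete one block per short cycle'' would remove more blocks than exist. The missing idea is the sparsification step that is the heart of the uncrowded-hypergraph method: first restrict to the sub-hypergraph induced on a random vertex subset of density $q$, chosen so that the expected number of short cycles there (of order $q^5v^3+q^6v^4$ for triple systems) is $o(qv)$, and then delete a \emph{vertex} from each surviving bad configuration. This also repairs the second defect of your plan: an independent set $I'$ of the block-deleted hypergraph $G'$ need not be independent in $G$, and your patch of discarding one vertex per deleted block contained in $I'$ can destroy almost all of $I'$, since the deleted blocks inside $I'$ form a packing on $|I'|$ points and may number $\Theta(|I'|^t)\gg |I'|$. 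Working with induced subhypergraphs on vertex subsets throughout makes independence transfer automatically.

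Your upper bound is closer to workable, but two assertions need repair. The estimate $\Pr[B_S]\leq\exp(-\Omega(ps^k))$ treats the survival of the $\binom{s}{k}$ candidate blocks inside $S$ as if independent, whereas the conflict-deletion correlates them; a standard fix is to restrict to a large family of pairwise $t$-disjoint candidate $k$-sets inside $S$, or to use a Janson-type inequality. More seriously, the dependency degree of the events $\{B_S\}$ is not $v^{O(1)}$: two sets $S,S'$ interact whenever they share $k$ points, so each $B_S$ depends on $\exp(\Theta(s\log v))$ others, and the Local Lemma in the form you invoke it gains nothing over a union bound. Fortunately, once $\Pr[B_S]\leq\exp(-Cs\log v)$ is secured with $C>1$ (achievable by taking $d$ large), a plain union bound over the $\binom{v}{s}\leq\exp(s\log v)$ sets finishes the argument. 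With these repairs, your parameter choices ($\kappa=k-1$, $\tau\asymp v^{(t-1)/(k-1)}$, $p\asymp v^{t-k}$) do produce the stated exponents.
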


Variations in Theorem \ref{GeneralBound} have resulted in improvements in the constants; see  \cite{EustisV13,KostochkaMRT02,KostochkaMV14,TianL18}.

It is possible in principle that restricting to Steiner systems, rather than packings, one might observe different behaviour in the minima.  
However, Phelps and R\"odl \cite{phelps} establish that the bounds of Theorem \ref{GeneralBound} apply to Steiner triple systems, not just to partial ones; that is,
\[   c \sqrt {v \ln v } \leq  \alpha_{min}^{\star} (2,3,v) \leq   d \sqrt{v \ln v} \]
for absolute constants $c$ and $d$.  
Grable, Phelps and R\"odl \cite{GrablePR95} establish similar statements when $t \in \{2,3\}$ for all $k > t$.

For the applications intended, it is of interest to find independent sets of (at least) the size guaranteed efficiently.  For research in this vein, see \cite{algorithm,fundia}.
Of course, one wants to find a pair of disjoint maximum independent sets whose total size is as large as possible, but this is NP-complete even for 3-uniform hypergraphs \cite{Lovasz73}.
Remarkably, there is a polynomial time algorithm to determine whether an $S(3,4,v)$ contains two independent sets, each of size $v/2$ \cite{ColbournCPR82}, but the ideas used do not appear to generalize.

Nevertheless, the bounds on sizes of smallest maximum independent sets provides bounds on the best sum metrics one can hope to achieve.  
We state one such bound explicitly, showing that some Steiner systems have only point labelings far from the bounds of Theorem \ref{basic}.

\begin{theorem}
For infinitely many orders $v$, there exists an STS$(v)$ $D$ with ${\sf MinSum}(D) \leq 3c\sqrt{v \ln v} - 3$ and ${\sf MaxSum}(D) \geq 3v - 3c \sqrt{v \ln v}$, and hence ${\sf DiffSum}(D) \geq 3v - 6c\sqrt{v \ln v} + 3$.
\end{theorem}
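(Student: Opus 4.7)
The statement follows directly from the Phelps--R\"odl upper bound on $\alpha_{min}^{\star}(2,3,v)$ quoted above together with Lemma~\ref{IndepBoundsone}. The plan is simply to invoke Phelps--R\"odl to obtain an STS$(v)$ with very small independence number, and then to substitute that bound into Lemma~\ref{IndepBoundsone}.

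First, I would apply Phelps--R\"odl: there is an absolute constant (which we identify with the $c$ of the statement, corresponding to the upper-bound constant $d$ in the display $\alpha_{min}^{\star}(2,3,v) \leq d\sqrt{v\ln v}$) such that for every admissible order $v$ (that is, $v \equiv 1,3 \pmod{6}$), there is a Steiner triple system $D$ of order $v$ satisfying $\alpha(D) \leq c\sqrt{v\ln v}$. Since the admissible orders form an infinite set, this already supplies the ``infinitely many $v$'' clause of the conclusion.

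Next, I would fix such a $D$ and apply Lemma~\ref{IndepBoundsone} to it with $t=2$, $k=3$, $\lambda=1$, so that $\binom{k}{2}=3$. The three conclusions of the lemma become ${\sf MinSum}(D) \leq 3\alpha(D) - 3$, ${\sf MaxSum}(D) \geq 3(v-1-\alpha(D)) + 3 = 3v - 3\alpha(D)$, and ${\sf DiffSum}(D) \geq 3(v+1-2\alpha(D)) = 3v + 3 - 6\alpha(D)$. Substituting the bound $\alpha(D) \leq c\sqrt{v\ln v}$ into each of these immediately yields the three inequalities asserted in the theorem.

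The proof contains essentially no obstacle; the hard work lies entirely in the cited Phelps--R\"odl bound. The only minor point worth a remark is that the three bounds all refer simultaneously to the same $D$, which is unproblematic because ${\sf MinSum}(D)$, ${\sf MaxSum}(D)$, and ${\sf DiffSum}(D)$ are each invariants of $D$ (each being the corresponding optimum taken over all point labelings of $D$), and all three are directly controlled by the single quantity $\alpha(D)$ through Lemma~\ref{IndepBoundsone}.
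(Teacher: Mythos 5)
Your proposal is correct and is exactly the argument the paper intends: the theorem is stated immediately after the Phelps--R\"odl bound $\alpha_{min}^{\star}(2,3,v) \leq d\sqrt{v\ln v}$ precisely so that it follows by substituting that bound into Lemma~\ref{IndepBoundsone} with $k=3$, and your arithmetic (including the identification of the statement's $c$ with the upper-bound constant $d$) checks out.
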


Hence we must focus on specific Steiner systems or packings, if we are to obtain sum metrics at or near the basic bounds.

\section{Dense $t$-$(v,t+1,1)$ Packings}\label{dense}

We establish next that one can obtain metrics close to the optimal when $k=t+1$ for packings that contain all but a vanishingly small fraction of the blocks of an $S(t,t+1,v)$ as $v \rightarrow \infty$. The independent set requirements indicate that we must have a maximum independent pair having large sizes.  
To accomplish this, we partition all $(t+1)$-subsets of ${\mathbb Z}_v$ according to their sum modulo $v$, and choose one class of the partition to form the blocks of the packing. The basic strategy dates back at least a century  to Bussey~\cite{Bussey1914}, and perhaps earlier.

Prior to establishing our result, we note that this is not a mere theoretical curiosity; as Chen {\sl et al.} observe in~\cite{chen1994raid}, declustered-parity RAIDs do not in practice need to have their loads perfectly balanced. Hence, for practical reasons, one may  choose to omit some blocks from the design.

\begin{theorem}\label{mainpack}
Let $t$ and $v$ be integers with $v > {t+2 \choose 2} + {t+1 \choose 2}$ and $(v,t+1) = 1$. 
For each of the following statements, there exists a  $t$-$(v,t+1,1)$ packing $D$ on elements $\mathbb{Z}_v$ having
\begin{align*}
    \frac{{v \choose t+1}}{v} = \frac{v-t}{v} \frac{{v \choose t}}{{t+1 \choose t}}
\end{align*}
blocks.
\begin{enumerate}
    \item[(1)] ${\sf MinSum}(D) = v + \sigma$ and ${\sf MaxSum}(D) = tv + \sigma$ whenever $-{t+2 \choose 2} + 1 \leq \sigma < {t+1 \choose 2}$.
    
    \item[(2)] ${\sf MinSum}(D) = v + {t+1 \choose 2} - 1$.
    
    \item[(3)] ${\sf MaxSum}(D) =tv-{t+2 \choose 2} + 1$.
    
    \item[(4)] ${\sf DiffSum}(D) =(t-1)v$.
    
    \item[(5)] ${\sf RatioSum}(D) =\frac{tv+{t+1 \choose 2}-1}{v+{t+1 \choose 2}-1}$.
\end{enumerate}
\begin{proof}
It suffices to prove statement (1); the other results follow directly from it.

Partition all $(t+1)$-subsets of $\mathbb{Z}_v$ into $v$ classes $\{\mathcal{B}_\sigma : 0 \leq \sigma < v\}$ by placing set $S = \{x_1, \dots, x_{t+1}\}$ in class $\mathcal{B}_{\sigma}$ if and only if $\sigma \equiv \Sigma_{i=1}^{t+1} x_i\;(\bmod\;v)$. Because for any $t$-subset $T$ of $\mathbb{Z}_v$ and each $\sigma$ with $0 \leq \sigma < v$ there is a unique element $s$ for which $\sigma \equiv s + \Sigma_{x \in T} x\;(\bmod\;v)$, each $\mathcal{B}_\sigma$ is a $t-(v,t+1,1)$ packing.

Without restrictions on $v$, these $v$ packings need not have the same number of blocks. We now use the restriction that $(v,t+1) = 1$. Consider the orbits of $(t+1)$-subsets of $\mathbb{Z}_v$ under the cyclic action of $\mathbb{Z}_v$. When $S$ is a $(t+1)$-subset of $\mathbb{Z}_v$ with sum $\sigma$, let $S + \alpha$ be the subset of $\mathbb{Z}_v$ obtained by adding $\alpha$ (modulo $v$) to each element in $S$. Then the orbit containing $S$ is $\{S + \alpha: 0 \leq \alpha < v\}$. 
For $0 \leq \alpha < v$, the sum of $S + \alpha$ is $\sigma + (t+1)\alpha\;(\bmod\;v)$. Now if $S + \alpha$ and $S + \beta$ have the same sum modulo $v$, we have $(t+1)\alpha \equiv (t+1)\beta\;(\bmod\;v)$, which can happen only when $\alpha \equiv \beta\;(\bmod\;v)$. 
Hence every orbit contains exactly $v$ blocks, one in each of the $v$ classes. It follows that each $\mathcal{B}_\sigma$ contains $\frac{{v \choose t+1}}{v}$ blocks.

Now we prove statement (1). First we treat the cases when $\sigma \geq 0$. 
Choose $\sigma$ so that $0 \leq \sigma < {t+1 \choose 2}$ and consider the packing $D = ({\mathbb Z}_v,\mathcal{B}_\sigma)$. 
Suppose to the contrary that $S$ is a $(t+1)$-subset of $\mathbb{Z}_v$ with smallest sum $\tau < v + \sigma$. When $\tau \equiv \sigma\;(\bmod\;v)$ and $\tau < v + \sigma$, it must happen that $\sigma = \tau = \Sigma_{x \in S} x$. But $\Sigma_{x \in S} x \geq \Sigma_{i=0}^{t} i = {t+1 \choose 2} > \sigma$, which is a contradiction. 
Hence ${\sf MinSum}(D) \geq v + \sigma$. 
Because $\sigma \geq 0$, $tv + \sigma$ is the largest integer less than $(t+1)v$ that is congruent to $\sigma$ modulo $v$, and hence ${\sf MaxSum}(D) \leq tv + \sigma$.

Next we address the cases when $-{t+2 \choose 2} + 1 \leq \sigma < 0$. Let $\omega = v + \sigma$, and consider the packing $D=({\mathbb Z}_v,\mathcal{B}_\omega)$. Then ${\sf MinSum}(D) \geq\omega = v + \sigma < v$, because of the congruence requirement. 
For ${\sf MaxSum}(D)$,  suppose to the contrary that $S$ is a $(t+1)$-subset of $\mathbb{Z}_v$ with largest sum $\tau > tv + \sigma = (t-1)v + \omega$. Then $\tau = tv + \omega$. Now $\Sigma_{i=1}^{t+1} (v-i) = (t+1)v - {t+2 \choose 2} \geq \Sigma_{x \in S} x$. Hence $\omega \leq v - {t+2 \choose 2}$ so $\sigma \leq -{t+2 \choose 2}$, which is a contradiction.

Statements (2), (4), and (5) follow by taking $\sigma = {t+1 \choose 2} - 1$. Statement (3) follows by taking $\sigma = -{t+2 \choose 2} + 1$.
\end{proof}
\end{theorem}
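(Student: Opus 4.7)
The plan is to construct the packing $D$ as a single congruence class in a partition of all $(t+1)$-subsets of $\mathbb{Z}_v$ by sum modulo $v$, and to derive all five statements from (1). First I would define, for each $\sigma \in \mathbb{Z}_v$, the class $\mathcal{B}_\sigma$ of all $(t+1)$-subsets whose element sum is congruent to $\sigma$ modulo $v$, and verify that $(\mathbb{Z}_v, \mathcal{B}_\sigma)$ is a $t$-$(v,t+1,1)$ packing: given a $t$-subset $T$, any completion to a block of $\mathcal{B}_\sigma$ must use the unique element $s \equiv \sigma - \sum_{x \in T} x \pmod v$, and the completion exists exactly when $s \notin T$. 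To show that each $\mathcal{B}_\sigma$ has the same size $\binom{v}{t+1}/v$, I would invoke the cyclic action $S \mapsto S + \alpha$: this shifts sums by $(t+1)\alpha \pmod v$, and the hypothesis $(v,t+1) = 1$ ensures that as $\alpha$ ranges over $\mathbb{Z}_v$ these shifts hit each residue exactly once. Hence every orbit has size $v$ and meets each class in exactly one block.

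The crux is statement (1), which I would split into two cases. For $0 \leq \sigma < \binom{t+1}{2}$, take $D = \mathcal{B}_\sigma$. Every block sum (viewed as a nonnegative integer) is congruent to $\sigma \pmod v$ and lies in the interval $\bigl[\binom{t+1}{2},\, (t+1)v - \binom{t+2}{2}\bigr]$, with endpoints realized by $\{0,1,\ldots,t\}$ and $\{v-t-1,\ldots,v-1\}$. Since $\sigma < \binom{t+1}{2}$, the value $\sigma$ itself is unattainable, so the least admissible sum is $v + \sigma$; the hypothesis $v > \binom{t+2}{2} + \binom{t+1}{2}$ gives $tv + \sigma \leq (t+1)v - \binom{t+2}{2}$, while $(t+1)v + \sigma$ exceeds the maximum achievable sum, so the greatest admissible sum is $tv + \sigma$. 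To see that both bounds are tight, I would use a short interval-filling argument: every integer in $\bigl[\binom{t+1}{2},\, (t+1)v - \binom{t+2}{2}\bigr]$ is realized as some $(t+1)$-subset sum (swap one element up by one at a time), so $v + \sigma$ and $tv + \sigma$ are in particular realized. For $-\binom{t+2}{2} + 1 \leq \sigma < 0$, set $\omega = v + \sigma$ and take $D = \mathcal{B}_\omega$; a symmetric argument, this time ruling out the next multiple of $v$ above via the upper endpoint of the achievable-sum interval, yields ${\sf MinSum}(D) = v + \sigma$ and ${\sf MaxSum}(D) = tv + \sigma$.

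Statements (2)--(5) then follow mechanically from (1). Setting $\sigma = \binom{t+1}{2} - 1$ and $\sigma = -\binom{t+2}{2} + 1$ gives (2) and (3) respectively. Statement (4) is immediate because ${\sf DiffSum}(D) = (tv + \sigma) - (v + \sigma) = (t-1)v$ is independent of $\sigma$. For (5), the ratio $(tv + \sigma)/(v + \sigma)$ is decreasing in $\sigma$ on the admissible range, so its minimum is at $\sigma = \binom{t+1}{2} - 1$.

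The main obstacle I anticipate is confirming that the extremal sums $v + \sigma$ and $tv + \sigma$ are simultaneously attained within the same class $\mathcal{B}_\sigma$; the congruence-and-interval argument alone gives only one-sided bounds, so an explicit realization (or a short swap-based existence argument) is needed to close the gap, and this is exactly where the size hypothesis on $v$ becomes essential. A second delicate point is the orbit count, which fails without $(v,t+1) = 1$: in its absence different classes can have different sizes, and the density claim $|D| = \binom{v}{t+1}/v$ would not follow.
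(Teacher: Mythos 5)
Your proposal is correct and follows essentially the same route as the paper's proof: the same partition of the $(t+1)$-subsets of $\mathbb{Z}_v$ by sum modulo $v$, the same orbit-counting use of the hypothesis $(v,t+1)=1$ to show each class has $\binom{v}{t+1}/v$ blocks, and the same two-case congruence argument for statement (1) with the remaining statements derived by specializing $\sigma$. The only substantive difference is that you explicitly supply the interval-filling (swap) argument showing that the extremal sums $v+\sigma$ and $tv+\sigma$ are actually attained, a point the paper's proof leaves implicit by stating only the one-sided inequalities.
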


Not surprisingly, the packings so produced contain large independent sets.
For example, when $\sigma=0$, the elements $\{0,\dots, \lfloor \frac{v}{t+1}\rfloor \}$ form an independent set.

Theorem \ref{mainpack} yields packings that are dense in the following sense.  
When an $S(t,t+1,v)$ exists, it has $ \frac{{v \choose t}}{{t+1 \choose t}}$ blocks; the packings considered have   a $\frac{v-t}{v}$ fraction of this number.  
Hence for fixed $t$ the fraction of $t$-sets left uncovered by the packing approaches 0 as $v \rightarrow \infty$.  
Moreover, the bounds established for dense $t$-$(v,t+1,1)$ packings on {\sf MinSum} and {\sf MaxSum} match the  values from Theorem \ref{basic} (which are best  for Steiner systems). 
On the other hand, as $v\rightarrow \infty$ and $t$ is fixed, the ratio of {\sf DiffSum} of the packing to the bound approaches 1, and the {\sf RatioSum} approaches its bound of $t-1$.
By generalizing to partial systems, Theorem \ref{mainpack} applies to all parameters that are large enough, whether or not an $S(t,t+1,v)$ exists.

Although Theorem~\ref{mainpack} establishes a {\sf DiffSum} of $(t-1)v$ for certain dense $t$-$(v,t+1,1)$ packings, one might hope to obtain a somewhat smaller {\sf DiffSum} when $t > 2$.  
Theorem \ref{fourpack} gives one result in this direction, producing a packing  that achieves a smaller {\sf DiffSum} than that of Theorem \ref{mainpack} when $t=3$, but is nearly as dense.   

\begin{theorem}\label{fourpack}
When $v > 18$ is even, there is a $3$-$(v,4,1)$ packing $D$ with $\frac{v-4}{v-1} \frac{\binom{v}{3}}{\binom{4}{3}}$ blocks,
having ${\sf MinSum}(D) = v+2$, ${\sf MaxSum}(D) = 3v-6$,  and hence ${\sf DiffSum}(D) = 2v-8$.
\end{theorem}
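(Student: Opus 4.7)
The plan is to build $D$ from 4-subsets of only two types, based on the halving $L = \{0, 1, \ldots, m-1\}$ and $U = \{m, m+1, \ldots, 2m-1\}$ with $m = v/2$. The motivation is that a $(3,1)$-type 4-subset (three elements in $L$, one in $U$) has sum in $[m+3, 5m-7]$ and a $(1,3)$-type has sum in $[3m+3, 6m-6]$, so together they span exactly the target interval $[v+2, 3v-6]$. Since $(v, t+1) = (v, 4) \neq 1$, the sum-mod-$v$ construction of Theorem~\ref{mainpack} does not apply directly, and the halved construction below is designed to sidestep this obstruction.

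Explicitly, for each $T = \{i, j, k\} \in \binom{L}{3}$, set $\phi(T) = m + ((2 - i - j - k) \bmod m) \in U$ and include the Type~I block $T \cup \{\phi(T)\}$. Symmetrically, for each $T' \in \binom{U}{3}$, set $\psi(T') = v - 1 - \phi(\{v - 1 - u : u \in T'\}) \in L$ and include the Type~II block $T' \cup \{\psi(T')\}$. The total count is $2\binom{m}{3} = \frac{v(v-2)(v-4)}{24}$, as required. To verify the packing property, classify 3-subsets $S \subseteq V$ by $(|S \cap L|,\ |S \cap U|)$: types $(3,0)$ and $(2,1)$ can appear only in Type~I blocks; types $(0,3)$ and $(1,2)$ only in Type~II; type $(2,2)$ in no block. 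Each $(3,0)$-subset is the $T$ of exactly one Type~I block. For a $(2,1)$-subset $\{a, b, x\}$ with $a, b \in L$, $x \in U$, an extension to a Type~I block forces $\phi(\{a, b, k\}) = x$; since the map $k \mapsto (2 - a - b - k) \bmod m$ is a translation in $\mathbb{Z}_m$ and hence injective on $L \setminus \{a, b\}$, at most one such $k$ exists. The Type~II argument is symmetric.

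To analyse block sums, write $s = i + j + k = qm + r$ with $r \in [0, m-1]$. A short case analysis on $r \leq 2$ versus $r \geq 3$ yields the Type~I block sum $s + \phi(T) = (q+1)m + 2$ in the first case and $(q+2)m + 2$ in the second. As $s$ ranges over $[3, 3m-6]$, the Type~I block sums take values in $\{v+2,\ 3v/2 + 2,\ 2v+2\}$, and the minimum $v+2$ is witnessed by $T = \{0, 1, 2\}$ producing the block $\{0, 1, 2, v-1\}$. By the reversal $x \mapsto v-1-x$, which swaps Type~I and Type~II blocks and sends a block sum $s$ to $4(v-1) - s$, the Type~II block sums lie in $\{2v-6,\ 5v/2-6,\ 3v-6\}$, the maximum $3v-6$ realised by $T' = \{v-3, v-2, v-1\}$ producing $\{0, v-3, v-2, v-1\}$. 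Therefore ${\sf MinSum}(D) = v+2$, ${\sf MaxSum}(D) = 3v-6$, and ${\sf DiffSum}(D) = 2v - 8$. The hypothesis $v > 18$, equivalently $m \geq 10$, keeps every subrange in the case analysis nonempty.

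The main obstacle is finding a single formula for $\phi$ that is simultaneously pair-injective (so $D$ is a packing) and satisfies the pointwise lower bound $\phi(T) \geq v + 2 - s(T)$ (so nothing undercuts ${\sf MinSum}$). The modular recipe above achieves both at once: restricting $\phi$ to 3-subsets containing a fixed pair reduces to a translation in $\mathbb{Z}_m$, while on the regime where the lower-sum constraint is binding ($s(T)$ small) the formula collapses to $\phi(T) = 2m + 2 - s(T)$, forcing the block sum to be exactly $v+2$. Once $\phi$ is in hand, every remaining step is routine bookkeeping.
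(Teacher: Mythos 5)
Your construction is correct and is essentially the paper's own: your Type~I blocks are exactly the paper's class $\{a,b,c,s+d\}$ with $a+b+c+d\equiv 2\pmod{s}$, and unwinding your reversal-based definition of $\psi$ shows your Type~II blocks are exactly the paper's class with $a+b+c+d\equiv s-6\pmod{s}$, yielding the same sum sets $\{v+2,\,3v/2+2,\,2v+2\}$ and $\{2v-6,\,5v/2-6,\,3v-6\}$. The only cosmetic slips --- the map $k\mapsto(2-a-b-k)\bmod m$ is a reflection rather than a translation (still a bijection, so your injectivity argument stands), and ``type $(2,2)$'' is not a type of a $3$-subset --- do not affect correctness.
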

\begin{proof}
Write $v = 2s$.  We form $D$ on elements $\{0,\dots,2s-1\}$, with blocks 
\begin{enumerate}
\item $\{\{ a,b,c,s+d\} : 0 \leq a < b < c < s, 0 \leq d < s, a+b+c+d \equiv 2 \pmod{s}\}$, and
\item $\{\{ s+a,s+b,s+c,d\} : 0 \leq a < b < c < s, 0 \leq d < s, a+b+c+d \equiv s-6 \pmod{s}\}$.
\end{enumerate}
This forms a 3-$(v,4,1)$ packing with the specified number of blocks.  
Because $a + b + c + d \in \{s+2,2s+2,3s+2\}$,  blocks of the first class have sum in $\{2s+2,3s+2,4s+2\}$.
Similarly, because $a + b + c + d \in \{s-6,2s-6,3s-6\}$,  blocks of the second class have sum in $\{4s-6,5s-6,6s-6\}$.
Hence ${\sf MinSum}(D) = 2s+2 = v+2$ and  ${\sf MaxSum}(D) = 6s-6 = 3v-6$.
\end{proof}

\section{Sums and Steiner triple systems}\label{sumsts}

Despite the utility of dense packings in the intended applications, it remains desirable to employ a Steiner system when possible. 
In what follows, we extend Theorem~\ref{mainpack} to produce Steiner triple systems in which the sum metrics are close to optimal. 

Building on the construction in Theorem~\ref{mainpack},  Schreiber~\cite{schreiber1973covering} and Wilson~\cite{wilson1974some} demonstrate that for certain values of $v$, the packing can be completed to an STS$(v)$.  We provide a proof of their result, in order to examine the consequences for certain sums.
The construction relies on a number-theoretic property, which we state next without proof.

\begin{theorem}\label{minus2}{\rm \cite{schreiber1973covering}}
Every cycle in an abelian group $G$ of order $n$ contains twice an odd number of elements if and only if, for every prime divisor $p$ of $n$, the order of $-2\;(\bmod\;p)$ is singly even.
\end{theorem}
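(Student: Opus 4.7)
The plan is to translate the cycle-length condition into a divisibility statement about multiplicative orders of $-2$ modulo divisors of $n$, and then to reduce that condition to prime moduli.  First, I would interpret a cycle in $G$ as an orbit of the automorphism $\varphi : x \mapsto -2x$ (well defined as an automorphism precisely when $n$ is odd; both sides of the biconditional implicitly require this, since $\mathrm{ord}_p(-2)$ is meaningless for $p=2$ and a singly even order cannot equal $1$).  For $x \in G$ of additive order $d > 1$, iterating gives $\varphi^k(x) = (-2)^k x$, so the $\varphi$-orbit through $x$ has length $\mathrm{ord}_d(-2)$, the multiplicative order of $-2$ modulo $d$.  Writing $G \cong \bigoplus_p G_p$ via primary decomposition, the orbit length of a general element is the lcm of the orbit lengths of its primary components, so the full set of cycle lengths in $G$ is obtained by lcms of the values $\mathrm{ord}_{p^j}(-2)$ for prime powers $p^j$ dividing the exponent of $G$.

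The forward direction is now immediate: take any prime $p \mid n$ and an element of $G$ of additive order $p$; its orbit has length $\mathrm{ord}_p(-2)$, which by hypothesis is singly even.  For the converse, I would argue in two stages.  \emph{Prime-power lifting.}  For odd $p$, the group $(\mathbb{Z}/p^j\mathbb{Z})^\times$ is cyclic of order $(p-1)p^{j-1}$, which forces $\mathrm{ord}_{p^j}(-2) = \mathrm{ord}_p(-2) \cdot p^{s}$ for some $0 \le s \le j-1$.  Since $p$ is odd, the factor $p^s$ contributes no $2$'s, so the $2$-adic valuation of $\mathrm{ord}_{p^j}(-2)$ equals that of $\mathrm{ord}_p(-2)$; hence $\mathrm{ord}_{p^j}(-2)$ is again singly even.  \emph{Combining via lcm.}  For any $d = \prod_i p_i^{e_i}$ dividing the exponent of $G$, the Chinese Remainder Theorem gives $\mathrm{ord}_d(-2) = \mathrm{lcm}_i\, \mathrm{ord}_{p_i^{e_i}}(-2)$; as each entry is singly even, no entry is divisible by $4$ and at least one is even, so the lcm is singly even.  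The same bookkeeping controls orbits that mix across primary components of $G$.

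The main obstacle is the prime-power lifting step: proving that $\mathrm{ord}_{p^j}(-2)/\mathrm{ord}_p(-2)$ is a power of $p$ relies on the cyclicity of $(\mathbb{Z}/p^j\mathbb{Z})^\times$ for odd $p$ together with a lift-the-exponent style analysis of $(-2)^{\mathrm{ord}_p(-2)}-1$ modulo successive powers of $p$.  Once this is in hand, the remainder of the argument is elementary lcm bookkeeping on $2$-adic valuations, and the reassembly across the primary decomposition of $G$ is automatic.
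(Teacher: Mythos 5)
The paper does not prove this statement: it is quoted from Schreiber \cite{schreiber1973covering} and explicitly ``stated without proof,'' so there is no in-paper argument to compare against. Judged on its own, your proof is correct and complete. The identification of a ``cycle'' with an orbit of $x \mapsto -2x$ on the nonzero elements is the intended reading (it is exactly how Theorem~\ref{special-SWC} uses the result, via the pairs $\{x,-2x\}$), and your observation that both sides force $n$ odd is right. The forward direction via Cauchy's theorem is immediate, and the converse reduces correctly to two facts: (i) for odd $p$, $\mathrm{ord}_{p^j}(-2)=\mathrm{ord}_p(-2)\cdot p^s$, so the $2$-adic valuation is unchanged under prime-power lifting; and (ii) an lcm of singly even integers is singly even. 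For (i) you do not even need cyclicity of $(\mathbb{Z}/p^j\mathbb{Z})^\times$: from $\mathrm{ord}_p(-2) \mid \mathrm{ord}_{p^j}(-2)$ and $\mathrm{ord}_{p^j}(-2) \mid \mathrm{ord}_p(-2)\,p^{j-1}$ (the latter by expanding $(1+cp)^{p^{j-1}}$), any divisor of $t p^{j-1}$ that is a multiple of $t$ has the form $t p^s$, which is all you need. The only loose end is cosmetic: the fixed point $0$ is a length-one orbit, so the statement must be read as concerning cycles on $G\setminus\{0\}$, which you implicitly do by assuming $d>1$.
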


\begin{lemma}\label{STS-lemma}
Let $n \equiv 1,5 \pmod{6}$.  Every pair in $\{\{a, b\} : a,b \in {\mathbb Z}_n \setminus \{0\}, a \equiv -2b \pmod{n}\}$ has $(n+1)/2 \leq a+b \leq (n-1)/2 + n$.
\end{lemma}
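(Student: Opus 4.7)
\medskip

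\noindent\textbf{Proof plan for Lemma \ref{STS-lemma}.} The plan is to translate the modular condition into an ordinary integer equation by bounding the possible multiples of $n$ involved, and then to split into two cases based on that multiple. Writing the canonical representatives $a, b \in \{1, \dots, n-1\}$, the condition $a \equiv -2b \pmod{n}$ becomes $a + 2b = kn$ for a positive integer $k$. Since $1 \leq a \leq n-1$ and $2 \leq 2b \leq 2(n-1)$, we have $3 \leq a + 2b \leq 3n - 3$, which forces $k \in \{1, 2\}$.

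In the first case, $a + 2b = n$, so $a = n - 2b$; for $a \geq 1$ we need $b \leq (n-1)/2$, and $b \geq 1$ is free. Substituting, $a + b = n - b$, and as $b$ ranges over $\{1, \dots, (n-1)/2\}$ the sum $a+b$ ranges over $\{(n+1)/2, \dots, n-1\}$. In the second case, $a + 2b = 2n$, so $a = 2n - 2b$; for $a \leq n-1$ we need $b \geq (n+1)/2$ (using that $n$ is odd, which is guaranteed by $n \equiv 1, 5 \pmod{6}$), and for $a \geq 1$ we need $b \leq n-1$. Then $a + b = 2n - b$ ranges over $\{n+1, \dots, (3n-1)/2\}$. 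Taking the union of the two ranges yields $(n+1)/2 \leq a + b \leq (3n-1)/2 = (n-1)/2 + n$, as desired.

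The argument involves no real obstacle; the only subtleties are (i) confirming that $k \in \{1,2\}$ exhausts the possibilities given the size constraints on $a$ and $b$, and (ii) observing that $n$ odd is what makes $(n\pm 1)/2$ an integer and the boundary cases $b=(n-1)/2$ (giving $a+b = (n+1)/2$) and $b=(n+1)/2$ (giving $a+b = (3n-1)/2$) both attainable. Since the pair $\{a,b\}$ is unordered but $a+b$ is symmetric, one need not separately consider the reversed relation $b \equiv -2a$; whichever ordering makes the hypothesis hold, the sum falls in the same interval.
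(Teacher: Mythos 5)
Your proof is correct and follows essentially the same route as the paper's: both reduce the congruence to the two integer equations $a+2b=n$ and $a+2b=2n$ (the paper phrases this as a case split on whether the doubled element lies in $\{1,\dots,(n-1)/2\}$ or $\{(n+1)/2,\dots,n-1\}$, writing the partner explicitly as $n-2a$ or $2n-2a$) and then computes the resulting range of the sum in each case. The only differences are cosmetic: you parametrize by the element being doubled on the other side of the relation and justify the case split via bounding the multiple $k$ of $n$, which is equivalent.
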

\begin{proof}
Consider such a pair $a,b \in \mathbb{Z}_{n}$ with $b\equiv -2a\;(\bmod\;n)$. We examine two cases:

\noindent {Case 1: $1 \leq a \leq (n-1)/2$}. Then $b = n-2a$ and hence $(n+1)/2 \leq a+b \leq n-1$.

\noindent {Case 2: $(n+1)/2  \leq a \leq n-1$}: Then $b = 2n-2a$ and hence $n+1 \leq a+b \leq n+(n-1)/2$.
\end{proof}

\begin{theorem}
\label{special-SWC}
Whenever $v \equiv 1,3\;(\bmod\;6)$ and for every prime $p$ dividing $v-2$, it is the case that the order of $-2 \bmod p$ is singly even. Hence there is an STS$(v)$,  $D$, with 
${\sf MinSum}(D) \geq v-2$, 
${\sf MaxSum}(D)  \leq 2v + 2$,
 and hence ${\sf DiffSum}(D)  \leq v + 4$ and ${\sf RatioSum}(D)  \leq \frac{2v+2}{v -2}$.
\end{theorem}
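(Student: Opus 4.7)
The plan is to complete the dense packing of Theorem~\ref{mainpack} to a Steiner triple system by adjoining two extra points and then choosing a labeling that places the extras at the middle ranks. I apply Theorem~\ref{mainpack} with $t=2$ on the ground set $\mathbb{Z}_{v-2}$---legitimate because $v \equiv 1,3 \pmod{6}$ forces $\gcd(v-2,3)=1$---taking $\mathcal{B}_0$, the family of $3$-subsets whose element sum is $\equiv 0 \pmod{v-2}$. As in the proof of Theorem~\ref{mainpack}, a pair $\{x,y\} \subseteq \mathbb{Z}_{v-2}$ fails to lie in $\mathcal{B}_0$ precisely when $y \equiv -2x$ or $x \equiv -2y \pmod{v-2}$, so the uncovered pairs are $\{\{a,-2a\} : a \in \mathbb{Z}_{v-2}\setminus\{0\}\}$. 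They form a $2$-regular graph $G$ on $\mathbb{Z}_{v-2}\setminus\{0\}$ whose cycles are the orbits of the permutation $a \mapsto -2a$, and by Lemma~\ref{STS-lemma} every edge of $G$ has endpoint-sum in $[(v-1)/2, (3v-7)/2]$.

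Next I invoke the number-theoretic hypothesis: since every prime $p$ dividing $v-2$ has $-2$ of singly even order modulo $p$, Theorem~\ref{minus2} ensures that every cycle of $G$ has even length. Hence $G$ decomposes into two perfect matchings $M_1, M_2$. Adjoining two new points $\infty_1, \infty_2$, I define $D$ to be $\mathcal{B}_0$ together with the triple $\{\infty_1, \infty_2, 0\}$ and the triples $\{\infty_i, x, x'\}$ for each $\{x, x'\} \in M_i$, $i=1,2$. Because $0 \in \mathbb{Z}_{v-2}$ lies in no uncovered pair, a routine pair-count verifies that $D$ is an STS$(v)$.

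For the labeling I place $\infty_1$ at rank $(v-1)/2$ and $\infty_2$ at rank $(v+1)/2$, and label $\mathbb{Z}_{v-2}$ via $\pi(x)=x$ for $x \leq (v-3)/2$ and $\pi(x) = x+2$ for $x \geq (v-1)/2$, so the lower half of $\mathbb{Z}_{v-2}$ receives the ranks below the extras and the upper half those above. The rank sum of any triple then equals its element sum plus twice the number $U$ of upper-half elements it contains. A short case analysis---using that a triple with element sum $v-2$ can contain at most one upper-half element, and one with element sum $2(v-2)$ at least two---shows the packing triples have rank sums in $\{v-2, v, 2v, 2v+2\}$. Combining Lemma~\ref{STS-lemma} with the rank shift of $2U$ for $U\in\{0,1,2\}$ upper-half endpoints, the triples $\{\infty_i, x, x'\}$ have rank sums in $[v-1, 2v+1]$, and $\{\infty_1, \infty_2, 0\}$ has rank sum exactly $v$. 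Every triple sum thus lies in $[v-2, 2v+2]$, giving the claimed bounds on ${\sf MinSum}$ and ${\sf MaxSum}$, from which the bounds on ${\sf DiffSum}$ and ${\sf RatioSum}$ are immediate.

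The main obstacle is the compatibility of the $2$-matching decomposition with the sum bounds on the $\{\infty_i, x, x'\}$ triples. If one instead placed $\infty_1, \infty_2$ at the extreme ranks $0$ and $v-1$, pairs of small endpoint-sum would be forced into the $\infty_2$-matching and pairs of large endpoint-sum into the $\infty_1$-matching; for $v$ not too small these demands conflict with the essentially unique alternating structure of any $2$-matching decomposition of an even cycle of $G$, and the construction fails. Placing the extras at the middle ranks removes this conflict because each $\{\infty_i, x, x'\}$ rank sum is then insensitive to which of $M_1, M_2$ contains the pair, so any decomposition supplied by Theorem~\ref{minus2} suffices.
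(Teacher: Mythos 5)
Your proposal is correct and follows essentially the same route as the paper: the sum-zero packing $\mathcal{B}_0$ on $\mathbb{Z}_{v-2}$, Theorem~\ref{minus2} to split the leftover $\{a,-2a\}$ pairs into two one-factors, two new points inserted at the middle ranks $(v-1)/2$ and $(v+1)/2$ with the shift map $\phi$, and the extra triple $\{0,(v-1)/2,(v+1)/2\}$. Your case analysis of the type-(1) sums is in fact slightly sharper than the paper's (yielding $\{v-2,v,2v,2v+2\}$ rather than the intervals $[v-2,v+2]\cup[2v-2,2v+2]$), but the resulting bounds are the same.
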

\begin{proof}
Let $n = v-2$.  As in the proof of Theorem~\ref{mainpack}, construct the 2-$(v,3,1)$ packing ${\mathcal B}_0$ on ${\mathbb Z}_{n}$.  Each triple in this packing has sum $v-2$ or $2v-4$ at present.
The pairs left uncovered by any triple are $\mathcal{E}_0 = \{ \{ x, -2x \} : x \in {\mathbb Z}_{v-2} \setminus \{0\}\}$, each having sum between $(v-1)/2$ and  $(v-3)/2 + v-2$ by Lemma~\ref{STS-lemma}.

Because for every prime $p$ dividing $v-2$, it is the case that the order of $-2 \bmod p$ is singly even, Theorem~\ref{minus2} ensures that the pairs in $\mathcal{E}_0$ can be partitioned into two $1$-factors, $F_1$ and $F_2$, on ${\mathbb Z}_n \setminus \{0\}$.

To form an STS$(v)$ on ${\mathbb Z}_v$ with block set $\mathcal{C}$, we employ the mapping $\phi: \mathbb{Z}_{v-2} \mapsto \mathbb{Z}_v \setminus \{(v-1)/2,(v+1)/2\}$ defined by $\phi(x) = x$ when $0 \leq x \leq (v-3)/2$ and $\phi(x) = x+2$ when $(v-1)/2 \leq x < v-2$. Then $\mathcal{C}$ is formed as follows.
\begin{enumerate}
    \item[(1)] When $\{x,y,z\} \in \mathcal{B}_0$, place $\{\phi(x),\phi(y),\phi(z)\}$ in $\mathcal{C}$;
    \item[(2)] For $i = 1,2$, when $\{x,y\} \in F_i$, place $\{(v-3+2i)/2,\phi(x),\phi(y)\}$ in $\mathcal{C}$;
    \item[(3)] Place $\{0,(v-1)/2,(v+1)/2\}$ in $\mathcal{C}$.
\end{enumerate}
Triples of $\mathcal{B}_0$ have sum $v-2$ or $2v-4$, so triples of type $(1)$ in $\mathcal{C}$ have sum between $v-2$ and $v+2$, or between $2v-2$ and $2v+2$. 
A pair $\{x,y\} \in \mathcal{E}_0$ has  $(v-1)/2 \leq x+y \leq (v-1)/2 + (v-3)$. 
Applying $\phi$, we have $(v-1)/2 \leq \phi(x) + \phi(y) \leq (v-1)/2 + (v+1)$. 
Hence, each triple of type $(2)$ in $\mathcal{C}$ has sum at least $v-1$ and at most $2v+1$. 
Finally, the single type $(3)$ block has sum $v$.
\end{proof}

Unlike the point labelings in~\cite{dau2018maxminsum}, the labeling for the Schreiber-Wilson construction in Theorem~\ref{special-SWC} need not achieve the largest {\sf MinSum} or smallest {\sf MaxSum}. 
 Nevertheless it yields a substantial improvement on earlier constructions with respect to the {\sf DiffSum} and {\sf RatioSum}, within an additive constant of the best bound possible for the {\sf DiffSum}.
Unfortunately, Theorem \ref{special-SWC} requires that the order of $-2 \bmod p$ be singly even, and so applies to an infinite set of orders but not all admissible ones. 
We remedy this next, using a result from \cite{bryant2017steiner}, but obtaining slightly weaker bounds.

\begin{theorem}\label{SWC-gen}
Whenever $v \equiv 1,3\;(\bmod\;6)$, there is an STS$(v)$,  $D$, with 
${\sf MinSum}(D) \geq v-5$, 
${\sf MaxSum}(D)  \leq 2v + 2$,
 and hence ${\sf DiffSum}(D)  \leq v + 7$ and ${\sf RatioSum}(D)  \leq \frac{2v+2}{v -5}$.
\end{theorem}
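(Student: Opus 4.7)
My plan is to mirror the proof of Theorem~\ref{special-SWC}, replacing the number-theoretic 1-factorization step (which required the order of $-2 \pmod{p}$ to be singly even) by an appeal to a more general completion result from~\cite{bryant2017steiner}. The strategy is the same: start from the cyclic ``zero-sum'' packing $\mathcal{B}_0$ on $\mathbb{Z}_{v-2}$ from Theorem~\ref{mainpack}, and extend it to an STS$(v)$ on $\mathbb{Z}_v$ using the three inserted points $(v-3)/2,(v-1)/2,(v+1)/2$ as produced by the shift $\phi$ of Theorem~\ref{special-SWC}.

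First I set $n = v-2$ and form $\mathcal{B}_0$ exactly as before, so every triple of $\mathcal{B}_0$ has sum $n$ or $2n$, i.e.\ $v-2$ or $2v-4$, and the uncovered pairs are $\mathcal{E}_0 = \{\{x,-2x\} : x\in\mathbb{Z}_n\setminus\{0\}\}$. By Lemma~\ref{STS-lemma} every pair in $\mathcal{E}_0$ has sum in $[(n+1)/2,\,(n-1)/2+n]$. Under $\phi$ the triples of $\mathcal{B}_0$ acquire sums in $[v-2,v+2]\cup[2v-2,2v+2]$, as in Theorem~\ref{special-SWC}, so their contribution is already inside the target window.

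Next I appeal to~\cite{bryant2017steiner} to cover the remaining pairs of $\mathcal{E}_0$ together with the pairs incident to the three new points, without requiring $\mathcal{E}_0$ to admit a 1-factorization. The result provides, for any admissible $v$, a collection of triples on $\mathbb{Z}_v$ which, together with the image of $\mathcal{B}_0$ under $\phi$ and the single triple $\{0,(v-1)/2,(v+1)/2\}$, completes to an STS$(v)$. In the ``good'' case (Theorem~\ref{special-SWC}) these completion triples split cleanly into two types, each containing one of $(v-1)/2$ or $(v+1)/2$ and a pair of $\mathcal{E}_0$, and have sums in $[v-1,2v+1]$. In the general case a bounded number of additional ``retriangulation'' triples may appear, each using only old points but produced by the flexible completion of~\cite{bryant2017steiner}; these are where the slack between $v-2$ and $v-5$ is absorbed.

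Finally I carry out the sum analysis: triples inherited from $\mathcal{B}_0$ lie in $[v-2,v+2]\cup[2v-2,2v+2]$; triples of the form $\{c,\phi(x),\phi(y)\}$ with $c\in\{(v-3)/2,(v-1)/2,(v+1)/2\}$ and $\{x,y\}\in\mathcal{E}_0$ have sum in $[(v-1)/2 + (v-1)/2 - 1,\,(v+1)/2 + (v-1)/2 + v+1]\subseteq[v-5,2v+2]$ after accounting for the $\phi$-shift and the occasional use of $(v-3)/2$; and the single type-(3) block contributes sum $v$. Verifying that every retriangulation triple from~\cite{bryant2017steiner} also falls in $[v-5,2v+2]$ then yields ${\sf MinSum}(D)\geq v-5$ and ${\sf MaxSum}(D)\leq 2v+2$, hence ${\sf DiffSum}(D)\leq v+7$ and ${\sf RatioSum}(D)\leq (2v+2)/(v-5)$.

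The main obstacle is Step~2: extracting from~\cite{bryant2017steiner} precisely the completion statement that applies uniformly to all $v\equiv 1,3\pmod{6}$, and, more delicately, tracking the sums of the handful of extra triples it introduces. This is exactly where the bound weakens from $v-2$ (achievable under the number-theoretic hypothesis of Theorem~\ref{special-SWC}) to $v-5$, and controlling that loss — rather than producing the packing itself — is the heart of the argument.
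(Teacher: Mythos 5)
There is a genuine gap, and it sits exactly where you flag ``the main obstacle.'' Your plan keeps \emph{all} of $\mathcal{B}_0$, adjoins only the two points $(v-1)/2$ and $(v+1)/2$ via $\phi$, and hopes that a ``flexible completion'' from \cite{bryant2017steiner} supplies the remaining triples, possibly with ``a bounded number of retriangulation triples using only old points.'' But in that configuration the completion is forced combinatorially: every pair among old points other than those in $\mathcal{E}_0$ is already covered by $\mathcal{B}_0$, so no additional triple on old points can be added without double-covering a pair; the pair of new points is used up by the type-(3) block; hence every edge of $\mathcal{E}_0$ must be placed in a triple with exactly one of the two new points, and the edges assigned to each new point must form a perfect matching of $\mathbb{Z}_{v-2}\setminus\{0\}$. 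That is precisely the statement that the $2$-regular graph $\mathcal{E}_0$ (a disjoint union of cycles) splits into two $1$-factors, which holds if and only if all its cycles are even --- the very number-theoretic condition of Theorem~\ref{minus2} that this theorem is trying to dispense with. When that condition fails there are no ``retriangulation triples'' to absorb the slack; the two-point completion simply does not exist.

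The missing idea is a structural modification before completing: delete the element $0$ together with the triples $\{0,x,v-2-x\}$ of $\mathcal{B}_0$, so that the freed pairs $\{x,v-2-x\}$ together with $\mathcal{E}_0$ form a $3$-regular graph $G$ on $\mathbb{Z}_{v-2}\setminus\{0\}$; then adjoin \emph{three} new points $(v-3)/2,(v-1)/2,(v+1)/2$ and invoke Lemma~9 of \cite{bryant2017steiner}, which partitions this particular cubic graph into three $1$-factors, one per new point. The drop from ${\sf MinSum}\geq v-2$ to ${\sf MinSum}\geq v-5$ comes from the modified relabeling $\psi$ (which shifts the small half of $\mathbb{Z}_{v-2}\setminus\{0\}$ down by $1$) and from type-(2) triples through the smallest new point $(v-3)/2$, not from extra triples of unknown shape. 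Your final interval arithmetic is reverse-engineered from the target bounds rather than derived from a concrete block set, so as written the argument cannot be checked; with the deletion-plus-three-points construction in place, the sum analysis you sketch goes through essentially as you describe.
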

\begin{proof}
Form $\mathcal{B}_0$ over $\mathbb{Z}_{v-2}$ as in the proof of Theorem \ref{special-SWC}.
Remove element $0$ as well as all triples $\{\{0,x,v-2-x\}: 1 \leq x \leq (v-3)/2\}$ to form $\mathcal{D}_0$. 
Let $\mathcal{E}_0$ be the set of pairs on $\mathbb{Z}_{v-2}$ not covered by a triple of $\mathcal{B}_0$. The pairs in $\mathcal{E}_0$, together with $\{\{x,v-2-x\}: 1 \leq x \leq (v-3)/2 \}$ form a 3-regular graph $G$ on $\mathbb{Z}_{v-2} \setminus \{0\}$. 
By  \cite[Lemma~9]{bryant2017steiner}, $G$ can be partitioned into three 1-factors, $F_1$, $F_2$, and $F_3$.

To form the STS$(v)$ on $\mathbb{Z}_v$ with block set $\mathcal{C}$, we employ the mapping $\psi: \mathbb{Z}_{v-2} \setminus \{0\} \mapsto \mathbb{Z}_v \setminus \{(v-3)/2,(v-1)/2,(v+1)/2\}$ defined by $\psi(x) = x-1$ when $1 \leq x \leq (v-3)/2$ and $\psi(x) = x + 2$ when $(v-1)/2 \leq x < v-2$. Then $\mathcal{C}$ is formed as follows.
\begin{enumerate}
    \item[(1)] When $\{x,y,z\} \in \mathcal{D}_0$, place $\{\psi(x), \psi(y), \psi(z)\}$ in $\mathcal{C}$;
    \item[(2)] For $i = 1,2,3$, when $\{x,y\} \in F_i$, place $\{(v-5+2i)/2,\psi(x),\psi(y)\}$ in $\mathcal{C}$;
    \item[(3)] Place $\{(v-3)/2,(v-1)/2,(v+1)/2\}$ in $\mathcal{C}$.
\end{enumerate}
Triples of $\mathcal{B}_0$ have sum $v - 2$ or $2v - 4$, so triples of type $(1)$ in $\mathcal{C}$ have sum between $v - 5$ and $v-2$, or between $2v-1$ and $2v+2$. 
By Lemma \ref{STS-lemma}, a pair $\{x,y\} \in \mathcal{E}_0$ has  $(v-1)/2 \leq x+y \leq (v-1)/2 + (v-3)$. 
Applying $\psi$, we have $(v-1)/2 - 2 \leq \psi(x) + \psi(y) \leq (v-1)/2 + (v+1)$. 
It follows that each triple of type $(2)$ in $\mathcal{C}$ has sum at least $v-4$ and at most $2v+1$. 
The block of type $(3)$ in $\mathcal{C}$ has sum $\frac{3v-3}{2}$.
\end{proof}

Although the bounds are slightly weaker, Theorem~\ref{SWC-gen} applies to all admissible orders for Steiner triple systems.
In conjunction with Theorem \ref{maxxm1}, for all $v \equiv 1,3 \pmod{6}$ with $v \geq 13$ one has $v+1 \leq {\sf DiffSum}(2,3,v) \leq v+7$ and $2+ \frac{1}{v} \leq {\sf RatioSum}(2,3,v) \leq  2 + \frac{12}{v-5}$.  
Using a simple computational search, we constructed $S(2,3,v)$s with specified {\sf MinSum}  and {\sf MaxSum}, as shown next:
\renewcommand{\tabcolsep}{2pt}

\begin{center}
\begin{tabular}{p{1in}cccc}
Order $v$ & {\sf MinSum} & {\sf MaxSum} & {\sf DiffSum} & {\sf RatioSum}\\
7 & $v-1$ & $2v-1$ & $v$ & $2+\frac{1}{v-1}$\\
9 & $v$ & $2v$ & $v$ & $2$ \\
13,15,19,21,25,27 & $v-1$ & $2v$ & $v+1$ & $2+\frac{2}{v-1}$ \\
7,15,19,21,27 & $v$ & $2v+1$ & $v+1$ & $2+\frac{1}{v}$ \\
13,25 & $v$ & $2v+2$ & $v+2$ & $2+\frac{2}{v}$ \\
\end{tabular}
\end{center}

It appears plausible that ${\sf DiffSum}(2,3,v) = v+1$ when $v \geq 13$.  
It also appears plausible that  ${\sf RatioSum}(2,3,v) \in \{ 2+\frac{1}{v}, 2+\frac{2}{v}\}$ for every $v\neq 9$, but there is insufficient data to speculate on when it takes the larger value and when the smaller. 

\section{Concluding remarks}

Because Theorem~\ref{mainpack} achieves a {\sf DiffSum} of $(t-1)v$ for dense $t$-$(v,t+1,1)$ packings, one might hope that this difference can be realized for $S(t,t+1,v)$ Steiner systems.
However, Theorem~\ref{diffp1} establishes that this does not happen when $t=2$ unless $v \in \{7,9\}$, although Theorem~\ref{SWC-gen} is within an additive constant.
The situation when $t=3$ appears to be  quite different. 
There is an $S(3,4,8)$ with blocks \[ \begin{array}{c} \{0127, 0136, 0145, 0235, 0246, 0347, 0567, \\ 1234, 1256, 1357, 1467, 2367, 2457, 3456\}, \end{array}  \] having {\sf MinSum} 10 and {\sf MaxSum} 18.
Adapting the construction in~\cite{DoyenV71,PhelpsRosa80}, one can produce an $S(3,4,v)$ with {\sf MinSum} $v+2$,  {\sf MaxSum} $3v-6$, and hence {\sf DiffSum} $2v-8$ whenever $v$ is a power of 2.  
In these cases, the upper bound on the {\sf MinSum} and the lower bound on the {\sf MaxSum} from Theorem~\ref{basic} are met simultaneously.
We do not expect this to happen for all orders, because the smallest {\sf DiffSum} for an $S(3,4,v)$ when $v \in \{10,14\}$ appears to arise from  systems with {\sf MinSum} $v+1$ and {\sf MaxSum} $3v-5$.
It may happen that for every admissible $v$, an $S(3,4,v)$ with {\sf DiffSum} strictly smaller than $2v$ exists.
If so, completing the packing from Theorem~\ref{mainpack} could not yield the smallest {\sf DiffSum}.
Nevertheless,  the structure of independent sets must underlie  appropriate constructions. 

\section*{Acknowledgements}
The work was supported by  NSF grants CCF 1816913 (CJC) and 1814298 (OM).



\begin{thebibliography}{10}
\providecommand{\url}[1]{#1}
\csname url@samestyle\endcsname
\providecommand{\newblock}{\relax}
\providecommand{\bibinfo}[2]{#2}
\providecommand{\BIBentrySTDinterwordspacing}{\spaceskip=0pt\relax}
\providecommand{\BIBentryALTinterwordstretchfactor}{4}
\providecommand{\BIBentryALTinterwordspacing}{\spaceskip=\fontdimen2\font plus
\BIBentryALTinterwordstretchfactor\fontdimen3\font minus
  \fontdimen4\font\relax}
\providecommand{\BIBforeignlanguage}[2]{{%
\expandafter\ifx\csname l@#1\endcsname\relax
\typeout{** WARNING: IEEEtranS.bst: No hyphenation pattern has been}%
\typeout{** loaded for the language `#1'. Using the pattern for}%
\typeout{** the default language instead.}%
\else
\language=\csname l@#1\endcsname
\fi
#2}}
\providecommand{\BIBdecl}{\relax}
\BIBdecl

\bibitem{ajtai}
M.~Ajtai, J.~Koml\'{o}s, J.~Pintz, J.~Spencer, and E.~Szemer\'{e}di, ``Extremal
  uncrowded hypergraphs,'' \emph{J. Combin. Theory Ser. A}, vol.~32, no.~3, pp.
  321--335, 1982.

\bibitem{algorithm}
C.~Bertram-Kretzberg and H.~Lefmann, ``The algorithmic aspects of uncrowded
  hypergraphs,'' \emph{SIAM J. Comput.}, vol.~29, no.~1, pp. 201--230, 1999.

\bibitem{bose1939construction}
R.~C. Bose, ``On the construction of balanced incomplete block designs,''
  \emph{Annals of Eugenics}, vol.~9, no.~4, pp. 353--399, 1939.

\bibitem{brummond2019kirkman}
W.~Brummond, ``Kirkman systems that attain the upper bound on the minimum block
  sum, for access balancing in distributed storage,'' \emph{Preprint}, 2019.

\bibitem{bryant2017steiner}
D.~Bryant, C.~J. Colbourn, D.~Horsley, and I.~M. Wanless, ``Steiner triple
  systems with high chromatic index,'' \emph{SIAM Journal on Discrete
  Mathematics}, vol.~31, no.~4, pp. 2603--2611, 2017.

\bibitem{Bussey1914}
W.~H. Bussey, ``The {T}actical {P}roblem of {S}teiner,'' \emph{Amer. Math.
  Monthly}, vol.~21, no.~1, pp. 3--12, 1914.

\bibitem{chen1994raid}
P.~M. Chen, E.~K. Lee, G.~A. Gibson, R.~H. Katz, and D.~A. Patterson, ``{RAID}:
  High-performance, reliable secondary storage,'' \emph{ACM Computing Surveys
  (CSUR)}, vol.~26, no.~2, pp. 145--185, 1994.

\bibitem{cidon2013copysets}
A.~Cidon, S.~M. Rumble, R.~Stutsman, S.~Katti, J.~K. Ousterhout, and
  M.~Rosenblum, ``Copysets: Reducing the frequency of data loss in cloud
  storage.'' in \emph{Usenix Annual Technical Conference}, 2013, pp. 37--48.

\bibitem{colbourn1999triple}
C.~J. Colbourn and A.~Rosa, \emph{Triple Systems}, ser. Oxford mathematical
  monographs.\hskip 1em plus 0.5em minus 0.4em\relax Clarendon Press, 1999.

\bibitem{ColbournCPR82}
C.~J. Colbourn, M.~J. Colbourn, K.~T. Phelps, and V.~R\"{o}dl, ``Colouring
  {S}teiner quadruple systems,'' \emph{Discrete Appl. Math.}, vol.~4, no.~2,
  pp. 103--111, 1982.

\bibitem{dau2018maxminsum}
H.~Dau and O.~Milenkovic, ``Max{M}in{S}um {S}teiner systems for access
  balancing in distributed storage,'' \emph{SIAM Journal on Discrete
  Mathematics}, vol.~32, no.~3, pp. 1644--1671, 2018.

\bibitem{DoyenV71}
J.~Doyen and M.~Vandensavel, ``Non isomorphic {S}teiner quadruple systems,''
  \emph{Bull. Soc. Math. Belg.}, vol.~23, pp. 393--410, 1971.

\bibitem{duke1995uncrowded}
R.~A. Duke, H.~Lefmann, and V.~R{\"o}dl, ``On uncrowded hypergraphs,''
  \emph{Random Structures \& Algorithms}, vol.~6, no. 2-3, pp. 209--212, 1995.

\bibitem{el2010fractional}
S.~El~Rouayheb and K.~Ramchandran, ``Fractional repetition codes for repair in
  distributed storage systems,'' in \emph{Communication, Control, and Computing
  (Allerton), 2010 48th Annual Allerton Conference on}.\hskip 1em plus 0.5em
  minus 0.4em\relax IEEE, 2010, pp. 1510--1517.

\bibitem{erdos}
P.~Erd\H{o}s and A.~Hajnal, ``On chromatic number of graphs and set-systems,''
  \emph{Acta Math. Acad. Sci. Hungar}, vol.~17, pp. 61--99, 1966.

\bibitem{EustisV13}
A.~Eustis and J.~Verstra\"{e}te, ``On the independence number of {S}teiner
  systems,'' \emph{Combin. Probab. Comput.}, vol.~22, no.~2, pp. 241--252,
  2013.

\bibitem{fazeli2015codes}
A.~Fazeli, A.~Vardy, and E.~Yaakobi, ``Codes for distributed {PIR} with low
  storage overhead,'' in \emph{Information Theory (ISIT), 2015 IEEE
  International Symposium on}.\hskip 1em plus 0.5em minus 0.4em\relax IEEE,
  2015, pp. 2852--2856.

\bibitem{fundia}
A.~D. Fundia, ``Derandomizing {C}hebyshev's inequality to find independent sets
  in uncrowded hypergraphs,'' \emph{Random Structures Algorithms}, vol.~8,
  no.~2, pp. 131--147, 1996.

\bibitem{GrablePR95}
D.~A. Grable, K.~T. Phelps, and V.~R\"{o}dl, ``The minimum independence number
  for designs,'' \emph{Combinatorica}, vol.~15, no.~2, pp. 175--185, 1995.

\bibitem{hanani1960note}
H.~Hanani, ``A note on {S}teiner triple systems,'' \emph{Mathematica
  Scandinavica}, vol.~8, no.~1, pp. 154--156, 1960.

\bibitem{holland1992parity}
M.~Holland and G.~A. Gibson, ``Parity declustering for continuous operation in
  redundant disk arrays,'' \emph{SIGPLAN Not.}, vol.~27, no.~9, pp. 23--35,
  Sep. 1992.

\bibitem{Ji2007}
L.~Ji, ``A construction for 2-chromatic {S}teiner quadruple systems,''
  \emph{European Journal of Combinatorics}, vol.~28, no.~6, pp. 1832--1838,
  2007.

\bibitem{KostochkaMRT02}
A.~Kostochka, D.~Mubayi, V.~R\"{o}dl, and P.~Tetali, ``On the chromatic number
  of set systems,'' \emph{Random Structures Algorithms}, vol.~19, no.~2, pp.
  87--98, 2001.

\bibitem{KostochkaMV14}
A.~Kostochka, D.~Mubayi, and J.~Verstra\"{e}te, ``On independent sets in
  hypergraphs,'' \emph{Random Structures Algorithms}, vol.~44, no.~2, pp.
  224--239, 2014.

\bibitem{Lovasz73}
L.~Lov\'{a}sz, ``Coverings and coloring of hypergraphs,'' \emph{Proceedings of
  the {F}ourth {S}outheastern {C}onference on {C}ombinatorics, {G}raph
  {T}heory, and {C}omputing ({F}lorida {A}tlantic {U}niv., {B}oca {R}aton,
  {F}la., 1973)}, pp. 3--12, 1973.

\bibitem{phelps}
K.~T. Phelps and V.~R\"{o}dl, ``Steiner triple systems with minimum
  independence number,'' \emph{Ars Combin.}, vol.~21, pp. 167--172, 1986.

\bibitem{PhelpsRosa80}
K.~T. Phelps and A.~Rosa, ``{$2$}-chromatic {S}teiner quadruple systems,''
  \emph{European J. Combin.}, vol.~1, no.~3, pp. 253--258, 1980.

\bibitem{RodlS94}
V.~R\"{o}dl and E.~\v{S}i\v{n}ajov\'{a}, ``Note on independent sets in
  {S}teiner systems,'' \emph{Random Structures Algorithms}, vol.~5, no.~1, pp.
  183--190, 1994.

\bibitem{schreiber1973covering}
S.~Schreiber, ``Covering all triples on n marks by disjoint {S}teiner
  systems,'' \emph{Journal of Combinatorial Theory, Series A}, vol.~15, no.~3,
  pp. 347--350, 1973.

\bibitem{silberstein2015optimal}
N.~Silberstein and T.~Etzion, ``Optimal fractional repetition codes based on
  graphs and designs,'' \emph{IEEE Transactions on Information Theory},
  vol.~61, no.~8, pp. 4164--4180, 2015.

\bibitem{silberstein2016optimal}
N.~Silberstein and A.~G{\'a}l, ``Optimal combinatorial batch codes based on
  block designs,'' \emph{Designs, Codes and Cryptography}, vol.~78, no.~2, pp.
  409--424, 2016.

\bibitem{skolem1959some}
T.~Skolem, ``Some remarks on the triple systems of {S}teiner,''
  \emph{Mathematica Scandinavica}, pp. 273--280, 1959.

\bibitem{spencer}
J.~Spencer, ``Tur\'{a}n's theorem for {$k$}-graphs,'' \emph{Discrete Math.},
  vol.~2, pp. 183--186, 1972.

\bibitem{TianL18}
F.~Tian and Z.-L. Liu, ``Bounding the independence number in some
  {$(n,k,\ell,\lambda)$}-hypergraphs,'' \emph{Graphs Combin.}, vol.~34, no.~5,
  pp. 845--861, 2018.

\bibitem{wilson1974some}
R.~M. Wilson, ``Some partitions of all triples into {S}teiner triple systems,''
  in \emph{Hypergraph seminar}.\hskip 1em plus 0.5em minus 0.4em\relax
  Springer, 1974, pp. 267--277.

\end{thebibliography}
\end{document}